\newcommand{\titlefont}{\fontsize{16.5pt}{24pt}\selectfont}
\newtheorem{remark}{\bf~~Remark}
\newtheorem{lemma}{\bf~~Lemma}
\newtheorem{proposition}{\bf~~Proposition}
\begin{document}
\title{{\titlefont Achievable Degrees of Freedom Analysis and Optimization in Massive MIMO via Characteristic Mode Analysis}}

\author{
	\IEEEauthorblockN{
		{Shaohua Yue}, \IEEEmembership{Graduate Student Member, IEEE},
		{Siyu Miao}, \IEEEmembership{Student Member, IEEE},
		{Shuhao Zeng}, \IEEEmembership{Member, IEEE},
		{Fenghan Lin}, \IEEEmembership{Senior Member, IEEE},
		{and Boya Di}, \IEEEmembership{Senior Member, IEEE}}
	
	\thanks{Shaohua Yue and Boya Di are with the State Key Laboratory of Photonics and Communications, School of Electronics, Peking University, Beijing 100871, China. (email: \{yueshaohua; boya.di\}@pku.edu.cn).} 
	\thanks{Shuhao Zeng is with Department of Electrical and Computer Engineering, Princeton University, NJ, USA. (email: shuhao.zeng96@gmail.com).}
	\thanks{Siyu Miao and Fenghan Lin are with the School of Information Science and Technology, ShanghaiTech University, Shanghai  201210, China. (email: \{miaosy;linfh\}@shanghaitech.edu.cn).} 
	\thanks{This work has been submitted to the IEEE for possible publication. Copyright may be transferred without notice, after which this version may no longer be accessible.}
}

\maketitle
\begin{abstract}
Massive multiple-input multiple-output (MIMO) is esteemed as a critical technology in 6G communications, providing large degrees of freedom (DoF) to improve multiplexing gain. This paper introduces characteristic mode analysis (CMA) to derive the achievable DoF. Unlike existing works primarily focusing on the DoF of the wireless channel,
the excitation and radiation properties of antennas are also involved in our DoF analysis, which influences the number of independent data streams for communication of a MIMO system. Specifically, we model the excitation and radiation properties of transceiver antennas using CMA to analyze the excitation and radiation properties of antennas. The CMA-based DoF analysis framework is established and the achievable DoF is derived. A characteristic mode optimization problem of antennas is then formulated to maximize the achievable DoF. A case study where the reconfigurable holographic surface (RHS) antennas are deployed at the transceiver is investigated, and a CMA-based genetic algorithm is later proposed to solve the above problem. By changing the characteristic modes electric field and surface current distribution of RHS, the achievable DoF is enhanced. 
Full-wave simulation verifies the theoretical analysis on the the achievable DoF and shows that, via the reconfiguration of RHS based on the proposed algorithm, the achievable DoF is improved.
\end{abstract}

\markboth{This work has been submitted to the IEEE for possible publication. Copyright may be transferred without notice, after which this version may no longer be accessible.}
{Shell \MakeLowercase{\textit{et al.}}: A Sample Article Using IEEEtran.cls for IEEE Journals}

\begin{IEEEkeywords}
	MIMO, Degrees of Freedom, Characteristic Mode Analysis, Reconfigurable Holographic Surfaces.
\end{IEEEkeywords}

\section{Introduction}\label{sec::intro}

Massive multiple-input multiple-output (MIMO) is a key enabling technology in wireless communication networks, which can fulfill the demands of spectral efficiency improvement for the future sixth generation (6G) communications\cite{6GMIMO,6GMIMO2,6GMIMO3}.
To achieve this, the physical aperture of antennas is scaled up, allowing the MIMO system to support a higher number of degrees of freedom (DoF). The DoF is typically defined as the number of eigenvalues of the MIMO channel matrix that exceed a certain threshold \cite{landaudof},
and it corresponds to the number of independent data streams available for effective data transmission\cite{dofstream}. As such, the DoF serves as a fundamental figure of merit for evaluating the spatial multiplexing gain of a MIMO system\cite{edofimportance}.

Existing research on DoF of MIMO systems can be mainly grouped into two categories based on different channels, i.e., the scattering channel\cite{dofwavenumber,DoFnear} and the line of sight (LoS) channel\cite{commode,multidof,af}. In the case of the scattering channel, it is modeled as the superposition of plane-wave modes of different angular directions, and then a Fourier plane-wave series expansion~\cite{wnd} is performed to analyze the DoF. Predicated on such an idea, the work\cite{dofwavenumber} studies the DoF of the isotropic scattering channel. 
The influence of the reactive near field on DoF of the scattering channel is studied in \cite{DoFnear}. 
Differently, the Green function-based channel modeling is used for DoF analysis for the LoS channel. 
In \cite{commode}, the DoF of the LoS channel between a large and a small intelligent surface is studied.
The DoF of a multiuser near-field communication scenario is discussed in \cite{multidof}, where the effect of spatial blocking between users' antennas on the DoF of the communication system is calculated. 
{In\cite{af}, the DoF of the near-field channel for five types of extremely-large scale MIMO designs are analyzed and compared.}

The above research on DoF analysis primarily focuses on the ideal case where only the wireless propagation environment is considered. However, the electromagnetic (EM) characteristics of antennas\footnote{The EM characteristics of antennas refer to properties that are related to the antenna excitation and radiation~\cite{antenna}, such as the antenna impedance~\cite{antennaimpedence}, antenna efficiency~\cite{mc}, and radiation pattern~\cite{dofantenna}.} also have a non-negligible influence on wireless signals between the transmitter and the receiver\cite{antennaimpedence}. 
Some initial works examine how dipole antennas \cite{mc} or simplified radiating objects \cite{dofcm} influence the DoF. In \cite{mc}, the effect of mutual coupling between antenna elements on the DoF is studied in the case of a one-dimension dipole antenna-based MIMO system. In \cite{dofcm}, the average shadow area of antennas are utilized to estimate the DoF of simplified radiating objects without considering the antenna ports. Our previous work \cite{dofantenna} investigates how the radiation pattern of antenna elements impacts the DoF of the isotropic scattering channel, but the excitation property of antennas is excluded. Note that wireless channels and EM characteristics of antennas are coupled. It still remains to be explored how to analyze and optimize the DoF while jointly considering the wireless channel and the excitation and radiation properties of various antennas.

In this paper, we aim to establish a framework to analyze the DoF where both the influence of antenna excitation and radiation characteristics as well as the wireless channel are involved. Two challenges have thus arisen. \emph{First}, it is non-trivial to model the excitation and radiation properties of antennas to ensure compatibility with communication models. Directly solving Maxwell's equations 
fails to derive closed-form expressions that capture the complete EM behavior for use in communication models. \emph{Second}, it is complicated to analyze the DoF when considering the EM properties of antennas. Such an analysis is based on the EM relationship between the input and output signals at the antenna ports. For this relationship, the EM properties of antennas are coupled with the wireless propagation channel, which alters the spatial characteristics of the channel perceived at the ports compared to the free-space channel. Consequently, conventional DoF analysis \cite{commode} based on closed-form free-space channel modeling and ideal, uncoupled antennas becomes invalid.

 To cope with the above challenges, we introduce the \emph{characteristic mode analysis} (CMA) theory to model the EM properties of antennas for DoF analysis. CMA is a theoretical framework that systematically investigates the inherent resonant properties of antenna structures by decomposing their surface currents into orthogonal \emph{characteristic modes}\cite{cmaoverview}. Eigenvalue equations derived from the \emph{electric field integral equation} are solved within CMA to quantify the EM properties for each characteristic mode\cite{cmatextbook}. 
In this way, the antenna's excitation properties are represented through the excitation coefficients of characteristic modes, while the antenna's radiation properties are modeled via the surface current (electric field) of characteristic modes.
Based on the analysis results provided by CMA, we integrate the EM property of antennas with the Green function-based modeling of the wireless propagation environment, giving rise to a \emph{CMA-modeled MIMO system}.
Moreover, we derive the \emph{achievable DoF} of MIMO based on the proposed CMA-based DoF analysis framework, serving as an evaluation of the multiplexing gain. 
Matrix analysis is also employed to decouple the antenna EM properties from the wireless propagation channel, studying how the antenna's EM properties affect the achievable DoF.

Our contributions are summarized below.
\begin{enumerate}
\item We propose a \emph{CMA-based DoF analysis framework}, where the CMA theory is applied to model the mapping relation between the input (output) EM signals and the surface current (electric field) of the antennas. The wireless propagation environment is modeled via the dyadic Green function. In the proposed CMA-modeled MIMO system, both the characteristics of antennas and the wireless propagation environment are considered.

\item Considering the EM characteristics of transceiver antennas, we derive the signal model and the the achievable DoF of the proposed CMA-modeled MIMO system. It is proven that the minimum among the DoF of the wireless propagation environment without the influence of the antennas, the number of antenna ports, and the number of characteristic modes provides an upper bound for the achievable DoF. Moreover, we demonstrate that the CMA-modeled MIMO system applies under general conditions, while the conventional MIMO adopting half-wavelength-spaced antenna arrays constitutes a special case of the proposed CMA-modeled MIMO.

\item We model the characteristic mode optimization problem of transceiver antennas in MIMO to maximize the achievable DoF. A case study where the MIMO system employs the reconfigurable holographic surfaces\footnote{The proposed DoF analysis framework can be extended to other types of antennas that can be modeled using CMA, i.e., antennas with metallic radiating structure.} (RHS)~\cite{rhspic} as antennas is discussed. \footnote{The reason for choosing RHS is that a change in the RHS configuration can affect the inherent structure and the EM property of RHS, thus affecting the achievable DoF.} We then propose a CMA-based genetic algorithm (CMA-GA) to optimize the RHS configuration for achievable DoF maximization. 
Simulation results show that the proposed CMA-GA algorithm can improve the achievable DoF of the RHS-deployed MIMO system by optimizing the EM properties of RHS, serving as a guideline for RHS reconfiguration.
\end{enumerate}

The rest of this paper is organized as follows. In Section \ref{sec::sys_mod}, the MIMO communication scenario is described, and the CMA theory is introduced, based on which the system model of the CMA-modeled MIMO is provided.  Section \ref{sec::dofa} analyzes the achievable DoF and discusses the relationship between the CMA-modeled MIMO system and the conventional MIMO system. Section \ref{sec::optimization} presents a characteristic mode optimization problem for maximizing the achievable DoF and a case study where the transceiver employs RHS as antennas is given. An RHS configuration optimization algorithm, CMA-GA, is then proposed. In Section \ref{sec::sim}, simulation results are provided, and conclusions are drawn in Section \ref{sec::conclusion}.

Throughout the paper, we use the following notation.
Vectors and matrices are represented by lower-case and upper-case boldface letters, respectively. The writing $\mathbf{X} \in \mathbb{C}^{a \times b}$ means that the size of $\mathbf{X}$ is ${a \times b}$ and each element of $\mathbf{X}$ is a complex number. 
We use $(\cdot)^T, (\cdot)^H$ and $(\cdot)^\dagger$
to denote the transpose, conjugate transpose, and pseudo-inverse operation, respectively. Besides, ${\rm diag}(x_1, ..., x_N)$ represents a diagonal matrix generated from  $x_1, ..., x_N$ and ${\rm blkdiag}(\mathbf{X}_1, ..., \mathbf{X}_N)$ represents a block diagonal matrix generated from matrices $\mathbf{X}_1, ..., \mathbf{X}_N$, and ${\rm j}$ denotes the imaginary unit.
\section{System Model}\label{sec::sys_mod}
In this section, we first describe a general MIMO communication scenario. Next, the signal model of the CMA-modeled MIMO communication is divided into three parts and elaborated successively, i.e., the CMA-based signal model of the transmit antenna, the dyadic Green Function-based channel model, and the CMA-based signal model of the receive antenna. 

\subsection{Scenario Description}
\begin{figure}[t]
	\centerline{\includegraphics[width=9cm,height=3.6cm]{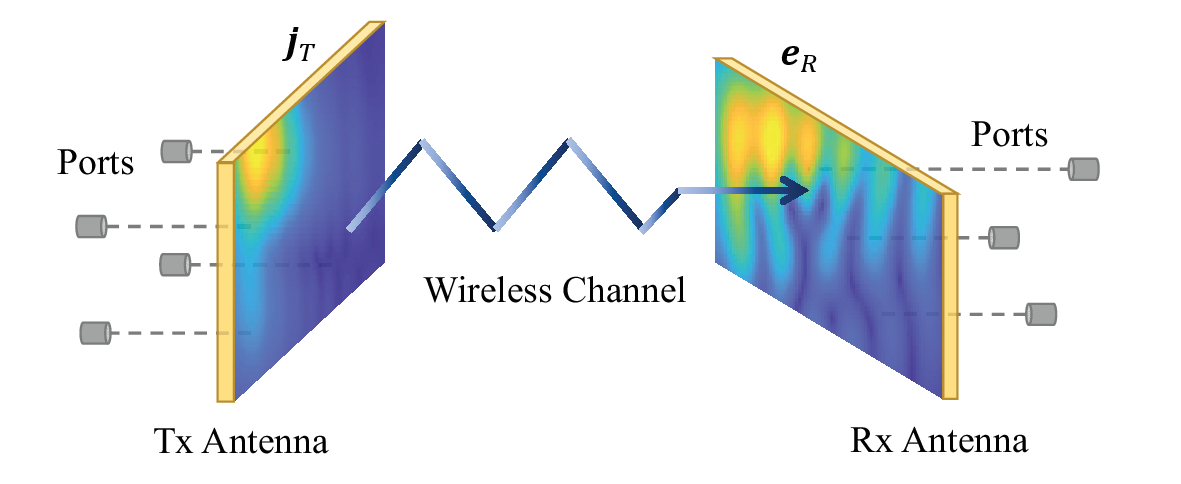}}
	\caption{An illustration of the discussed MIMO communication system.}
	\label{fig::scenario}
\end{figure}

As shown in Fig.~\ref{fig::scenario}, we consider the LoS MIMO communication system consisting of one transmit antenna with $L_T$ ports and one receive antenna with $L_R$ ports. At the transmitter side, the EM signal denoted by $\bm{s}_T \in \mathbb{C}^{L_T\times1}$ is fed into the antenna ports, and a current $\bm{j}_T$ is induced on the surface of the antenna. The surface current generates an EM wave, which propagates through the wireless environment and impinges on the aperture of the receive antenna, turning into the electric field $\bm{e}_{R}$. At the receiver side, the impinged electric field $\bm{e}_{R}$ results in the received EM signals $\bm{s}_R \in \mathbb{C}^{L_R\times1}$ at antenna ports. The entire propagation process from the transmitter to the receiver can be denoted as
\begin{equation}
	\bm{s}_{R} =\mathbf{U}_R\mathbf{G}\mathbf{U}_T\bm{s}_T+\bm{n},\label{equ::sigmodel}
\end{equation}
where $\mathbf{U}_T$ denotes the mapping relation between the input signal and the surface current $\bm{j}_T$, i.e., $\bm{j}_T = \mathbf{U}_T\bm{s}_T $. The term $\mathbf{G}$ denotes the propagation of EM waves in the free space, $\mathbf{U}_R$ denotes the mapping relation between the impinged electric field and the received signal, i.e., $\bm{s}_T= \mathbf{U}_R\bm{e}_R$. The term $\bm{n}$ denotes the additive white Gaussian noise. The terms $\mathbf{U}_T$ and $\mathbf{U}_R$ are subject to the EM property of the transmit and the receive antenna, which will be elaborated in detail using the CMA theory as follows.

\subsection{CMA-based Transmit Antenna Modeling}
\begin{figure}[t]
	\centerline{\includegraphics[width=7cm,height=3.5cm]{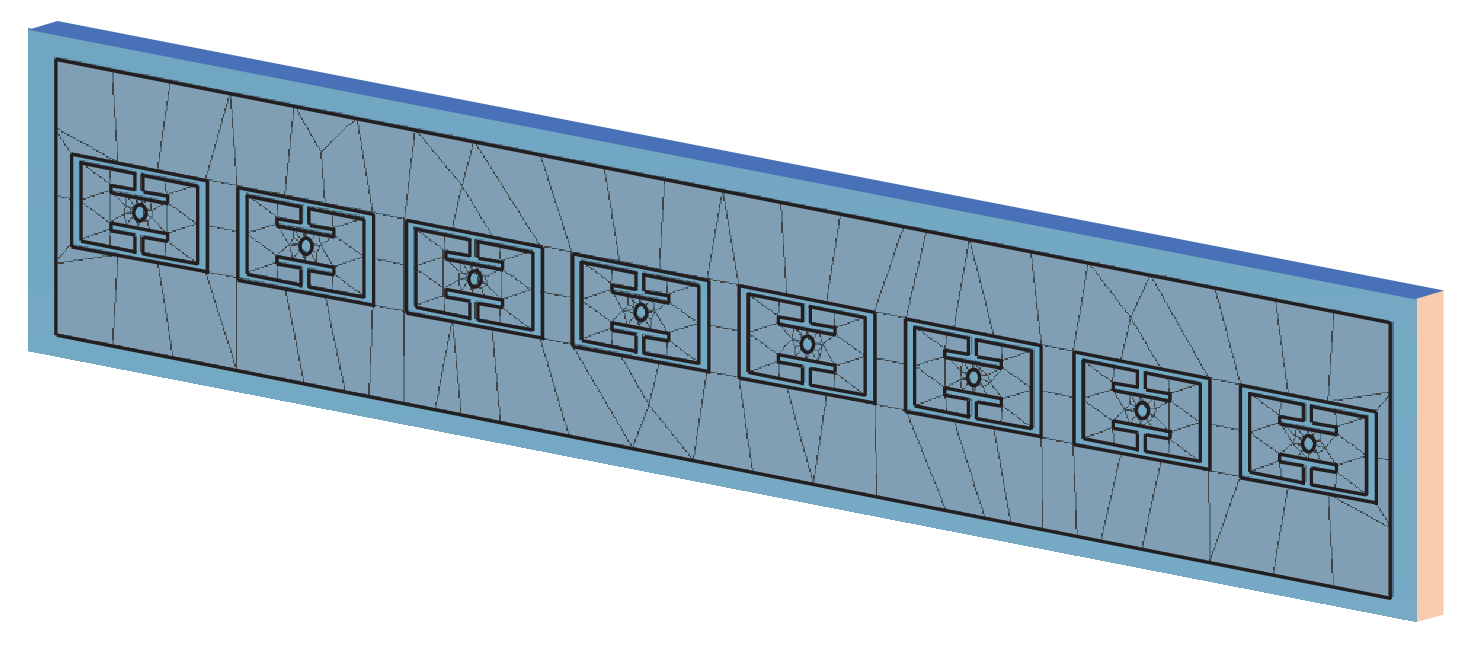}}
	\caption{The surface mesh of a patch antenna produced by MoM.}
	\label{fig::mesh}
\end{figure}

CMA is a versatile method to analyze the resonance properties of an arbitrary antenna by decomposing its impedance matrix into a number of orthogonal characteristic modes, which can assist the process of antenna design and analysis\cite{cmaoverview}. Each characteristic mode corresponds to a type of characteristic mode current distributed on the surface of the antenna and an induced characteristic mode electric field. Here, the CMA theory is applied to model the EM property of the transmit and receive antenna in the MIMO system\footnote{The following analysis, based on CMA, applies to both discrete antenna array and continuous aperture antenna\cite{patterndivision}.}.

CMA aims to implement antenna analysis by analyzing the \emph{electric field integral equation}, which gives the mapping relation between the surface current $\bm{j}_T$ and the induced electric field $\bm{e}_T$ on the antenna surface\cite{cmatextbook}. To calculate EFIE numerically, CMA adopts the \emph{method of moments} (MoM), based on which the surface of the antenna is divided into $N_T$ small triangular or quadrilateral elements, as illustrated in Fig.~\ref{fig::mesh}. 
By transforming EFIE, an impedance matrix $\mathbf{Z}_T$, which is determined by the structure of antenna, is built to depicted the relation between $\bm{e}_{T}$ and $\bm{j}_T$.
The impedance matrix can be split into the real and imaginary part as $\mathbf{Z}_T = \mathbf{R}_T+{\rm j}\mathbf{X}_T$, where $\mathbf{R}_T$ is related to the radiating EM field and $\mathbf{X}_T$ is related to the stored EM field\footnote{The electric field and the magnetic field of the stored EM field are out of phase by 90 degrees, which cannot propagate through long distances, and only exists in the reactive near-field region of the antenna\cite{antenna}.}. 

Next, a weighted eigenvalue equation can be built upon $\mathbf{R}_T$ and $\mathbf{X}_T$\cite{cmatextbook}, which is denoted as
\begin{equation}
	\mathbf{X}_T\tilde{\bm{j}}_{T,i}=\lambda_{T,i}\mathbf{R}_T\tilde{\bm{j}}_{T,i},
\end{equation}
where $\tilde{\bm{j}}_{T,i}$ and $\lambda_{T,i}$ are the $i$-th eigenvector and eigenvalue, and $\tilde{\bm{j}}_{T,i}$ has the physical meaning of the $i$-th characteristic mode current. Note that $\tilde{\bm{j}}_{T,i}$ is a vector field, hence, to facilitate subsequent analysis, the X, Y, Z polarization components of the $i$-th characteristic mode current at each divided element are combined to form a new vector $\bm{j}_{T,i}\in\mathbb{C}^{3N_T\times1}$.

Based on the orthogonality that stems from the eigenvalue function, it can be inferred that if one characteristic mode has strong currents at a specific point on the surface of the antenna, all other characteristic modes should have very weak currents at the same point. After performing a normalization on the surface current, such an orthogonal property can be denoted as
\begin{equation}
	\bm{j}_{T,l}^T\bm{j}_{T,i} = \delta_{l,i},
\end{equation}
where $\delta_{l,i}=1, l=i$ and $\delta_{l,i}=0, l\neq i$.

Therefore, for the transmit antenna, the surface current can be represented by a linear superposition of a series of characteristic mode currents as
\begin{equation}
	\bm{j}_T = \sum_{i=1}^{n_T}\sum_{l=1}^{L_T}\alpha_{T,i,l}\bm{j}_{T,i},
\end{equation}
where $n_T$ is the number of characteristic modes of the transmit antenna, $L_T$ is the number of ports of the transmit antenna, $\alpha_{T, i,l}$ is the superposition coefficient, and $\bm{j}_{T, i}$ is the $i$-th characteristic mode current. The dimension of the characteristic mode current is three times the number of divided elements on the antenna surface because the three polarization directions are considered.

The superposition coefficient can be denoted as\cite{cmatextbook}
\begin{equation}
	\alpha_{T,i,l} = m_{T,i}V_{T,i,l}s_{T,l},
\end{equation}
where $s_{T,l}$ is the input signal at the $l$-th port of the antenna, $m_{T,i}$ is the significance coefficient of $i$-th characteristic mode and characterizes each mode's radiation ability. The term $m_{T,i}$ is related to the eigenvalue $\lambda_{T,i}$ with the following equation as\cite{cmatextbook}
\begin{equation}
	m_{T,i} = \frac{1}{1+{\rm j}\lambda_{T,i}}.
\end{equation}
The term $V_{T, i,l}$ denotes the \emph{modal excitation coefficient}, which evaluates the ability of the $l$-th port of the antenna to excite the $i$-th characteristic mode.

\begin{figure}[t]
	\centerline{\includegraphics[width=9cm,height=5.2cm]{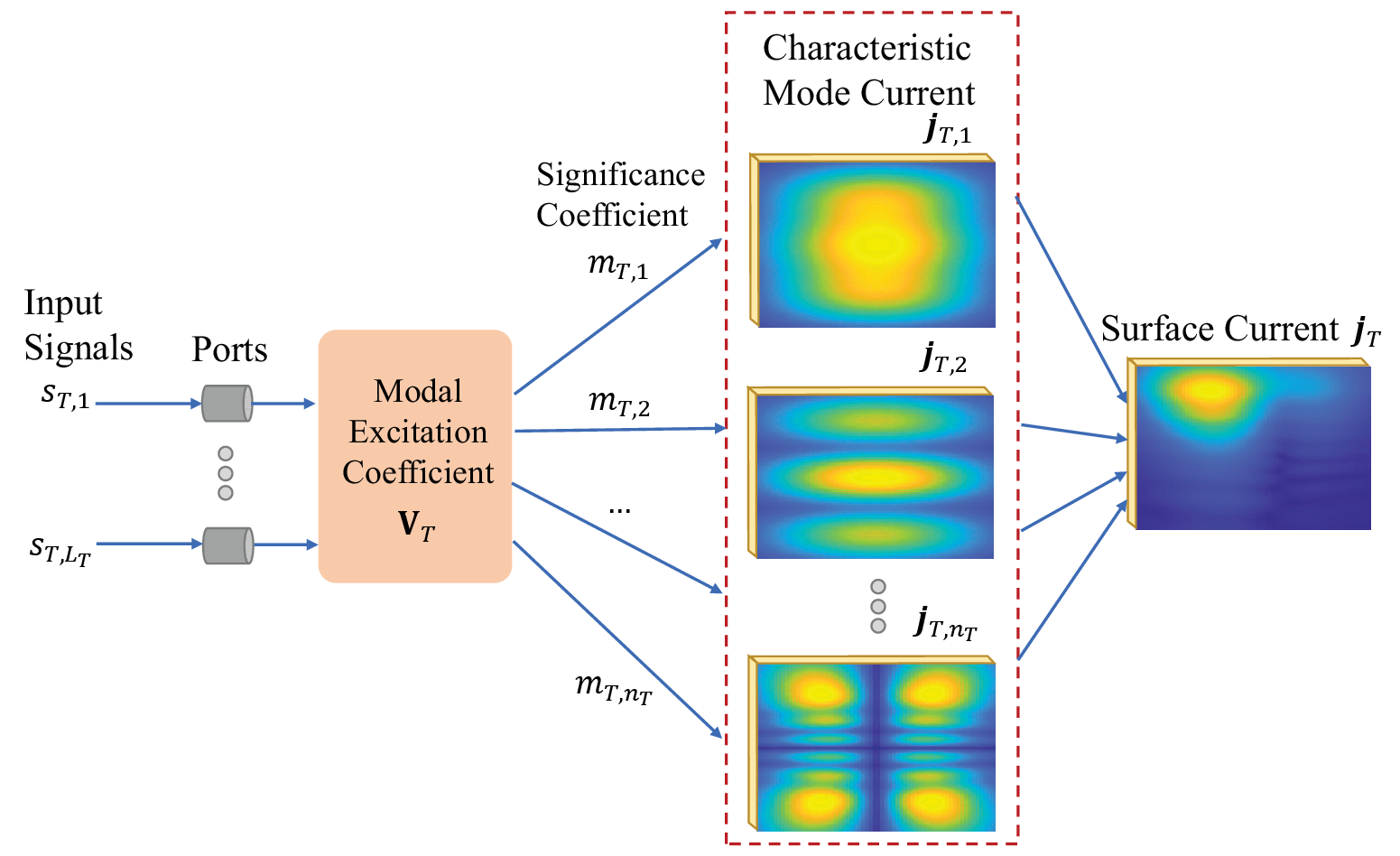}}
	\caption{An illustration of the transmitter antenna showing how the surface current is generated by the input EM signals based on CMA.}
	\label{fig::tx}
\end{figure}

As shown in Fig.\ref{fig::tx}, how the input signals from the ports transform into the surface current, which is denoted by $\mathbf{U}_T$ in (\ref{equ::sigmodel}), can be further expressed as
\begin{equation}
	\bm{j}_T= \overline{\mathbf{J}}_T {\rm diag}\{\bm{m}_T\}\mathbf{V}_T\bm{s}_T,\label{equ::jcma}
\end{equation}
where $\overline{\mathbf{J}}_T = [\bm{j}_{T,1},\bm{j}_{T,2},...,\bm{j}_{T,n_T}]\in\mathbb{C}^{3N_T \times n_T}$ concatenates all characteristic mode currents into a matrix, $\bm{m}_T\in \mathbb{C}^{n_T\times1}$ denotes a vector including all significance coefficients, $\mathbf{V}_T \in \mathbb{C}^{n_T \times L_T}$ denotes the modal excitation coefficient matrix. 
Here, equation (\ref{equ::jcma}) illustrates the radiation properties of the transmit antenna using the surface current and the significance coefficient of all characteristic modes, while the excitation properties is modeled using the modal excitation coefficient for all characteristic modes.

\subsection{Dyadic Green Function-based Channel Modeling}

The dyadic Green function is applied here to depict the LoS wireless channel in the free space, i.e., how the surface current on the transmit antenna induces the electric field on the surface of the receive antenna ~\cite{hmimocomm}. The propagation process is denoted as\footnote{The surface current $\bm{j}_T(\bm{r}')$, electric field $\bm{e}_R(r)$, and the dyadic green function $\overline{\overline{\mathbf{G}}}(\bm{r},\bm{r}')$ are continuous functions in (\ref{equ::green}) and (\ref{equ::dgreen}).}
\begin{equation}
	\bm{e}_R(r) = -{\rm j}\omega\mu_0\int_{A_T}\overline{\overline{\mathbf{G}}}(\bm{r},\bm{r}') \bm{j}_T(\bm{r}')d\bm{r}',\label{equ::green}
\end{equation}
where $\omega$ is the frequency of the EM wave, $\mu_0$ is the permittivity of the air, $\overline{\overline{\mathbf{G}}}$ denotes the dyadic Green function, the region $A_T$ denotes the surface of the transmit antenna, $\bm{r}$ and $\bm{r}'$ denote two arbitrary points on the surface of the receive antenna and the transmit antenna, respectively.

Under the LoS wireless propagation condition, the analytical expression of the dyadic Green function is obtained as\cite{electromagnetic}


\begin{align}
	&\overline{\overline{\mathbf{G}}}(\bm{r},\bm{r}')=-\frac{{\rm j}\eta\mathrm{exp}(-jk_0d)}{2\lambda d}[\left(\mathbf{I}-\widehat{\boldsymbol{d}}\cdot\widehat{\boldsymbol{d}}^T\right)+\nonumber\\
	&\frac{{\rm j}\lambda}{2\pi d}\left(\mathbf{I}-3\widehat{\boldsymbol{d}}\cdot\widehat{\boldsymbol{d}}^T\right) -\frac{\lambda^2}{(2\pi d)^2}\left(\mathbf{I}-3\widehat{\boldsymbol{d}}\cdot\widehat{\boldsymbol{d}}^T\right)],\label{equ::dgreen}
\end{align}
where $k_0=2\pi/\lambda$ denotes the wavenumber in the air, $\lambda$ is the wavelength, $\eta$ denotes the impedance of the free space, $\bm{d} = \bm{r}-\bm{r}'$ denotes the distance between the point $\bm{r}'$ on the transmit antenna and the point $\bm{r}$ on the receive antenna, $d = |\bm{r}-\bm{r}'|$ and $\widehat{\boldsymbol{d}}=\bm{d}/d$.

For the convenience of calculation, the continuous dyadic Green function $\overline{\overline{\mathbf{G}}}$ is discretized. To this end, the surfaces of the transmit and receive antennas are divided into $N_T$ and $N_R$ small elements, respectively, and each small element is considered to have a constant current or electric field. The same surface mesh is adopted in the CMA of the transmit and receive antennas to ensure consistency. Hence, the integral equation (\ref{equ::green}) can be rewritten as
\begin{equation}
	\bm{e}_{R} = \mathbf{G}\bm{j}_T,\label{equ::discGreen}
\end{equation}
where $\mathbf{G}$ is the discrete version of $\overline{\overline{\mathbf{G}}}$. The dimension of $\mathbf{G}$ is $3N_R \times 3N_T$ because of the three directions of polarization of EM waves. 

\subsection{CMA-Based Receive Antenna Modeling}
Based on the CMA theory, how the impinged electric field $\bm{e}_R  \in \mathbb{C}^{3N_R \times 1}$ generates the received signal $\bm{s}_R$ output from the receive antenna ports is given. Note that only a fraction of $\bm{e}_R$ can be appropriately received by the receive antenna because the residual fraction of $\bm{e}_R$, denoted by $\bm{e}_{R, res}$, is orthogonal to all the characteristic mode electric fields of the receive antenna, i.e., $\bm{e}_{R, res}$ either dissipates or is reflected. For example, for an X-polarized receive antenna, the characteristic modes of the receive antenna are all X-polarized and a Y-polarized electric field impinging on the antenna cannot be appropriately received. The decomposition of $\bm{e}_R$ can be denoted as

\begin{equation}
	\bm{e}_{R} = \sum_{i=1}^{n_R}{\beta_{R,i}m_{R,i}}\bm{e}_{R,i}+\bm{e}_{R,res},
\end{equation}
where $n_R$ is the number of characteristic modes of the receive antenna, $\beta_{R, i}$, $m_{R, i}$, $\bm{e}_{R, i}$ are the superposition coefficient, the significance coefficient, and the electric field of the $i$-th characteristic mode of the receive antenna, respectively. The term $m_{R,i}$ is given as $m_{R,i} = 1/(1+{\rm j}\lambda_{R,i})$, where $\lambda_{R,i}$ is the eigenvalue of $i$-th characteristic mode. 

\begin{lemma}\label{prop::rx} \rm
	The vector of superposition coefficient $\bm{\beta}_R=[\beta_{R,1},\beta_{R,2},...,\beta_{R,n_R}]^T$ is given as
	\begin{equation}
		\bm{\beta}_R={\rm diag}\{\bm{m}_{R}\}^{-1}\overline{\mathbf{E}}_R(\overline{\mathbf{E}}_R^H\overline{\mathbf{E}}_R)^{-1}\overline{\mathbf{E}}_R^H\bm{e}_{R}, \label{equ::rxbeta}
	\end{equation}
	where $\bm{m}_R\in\mathbb{C}^{n_R\times1}$ is a vector including all significance coefficients and  The term $\overline{\mathbf{E}}_R = [\bm{e}_{R,1},\bm{e}_{R,2},...,\bm{e}_{R,n_R}]\in\mathbb{C}^{3N_R\times n_R}$ is the matrix that collects all the electric fields of characteristic modes. 
\end{lemma}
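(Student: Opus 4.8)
The plan is to recast the modal decomposition of $\bm{e}_R$ as a single matrix equation and then exploit the stated orthogonality of the residual field to isolate $\bm{\beta}_R$ by an orthogonal projection. Stacking the expansion $\bm{e}_R=\sum_{i=1}^{n_R}\beta_{R,i}m_{R,i}\bm{e}_{R,i}+\bm{e}_{R,res}$ column-wise, I would first observe that the $i$-th column of $\overline{\mathbf{E}}_R$ is $\bm{e}_{R,i}$ and that the $i$-th entry of ${\rm diag}\{\bm{m}_R\}\bm{\beta}_R$ equals $m_{R,i}\beta_{R,i}$, so the decomposition is exactly
\begin{equation}
	\bm{e}_R = \overline{\mathbf{E}}_R\,{\rm diag}\{\bm{m}_R\}\,\bm{\beta}_R + \bm{e}_{R,res}.
\end{equation}

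The second step is to eliminate the residual. Since $\bm{e}_{R,res}$ is, by construction, orthogonal to every characteristic-mode electric field, it lies in the orthogonal complement of the column space of $\overline{\mathbf{E}}_R$, i.e. $\overline{\mathbf{E}}_R^H\bm{e}_{R,res}=\bm{0}$. Left-multiplying the equation above by $\overline{\mathbf{E}}_R^H$ therefore annihilates the residual term and produces the reduced normal equation
\begin{equation}
	\overline{\mathbf{E}}_R^H\bm{e}_R = \overline{\mathbf{E}}_R^H\overline{\mathbf{E}}_R\,{\rm diag}\{\bm{m}_R\}\,\bm{\beta}_R .
\end{equation}
It then remains only to invert the Gram matrix $\overline{\mathbf{E}}_R^H\overline{\mathbf{E}}_R$ and the diagonal matrix ${\rm diag}\{\bm{m}_R\}$ and solve for $\bm{\beta}_R$. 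Recognizing $(\overline{\mathbf{E}}_R^H\overline{\mathbf{E}}_R)^{-1}\overline{\mathbf{E}}_R^H=\overline{\mathbf{E}}_R^{\dagger}$ as the pseudo-inverse of the tall, full-column-rank matrix $\overline{\mathbf{E}}_R$ then reproduces the claimed expression, the map $\overline{\mathbf{E}}_R\overline{\mathbf{E}}_R^{\dagger}$ being precisely the orthogonal projection of $\bm{e}_R$ onto the characteristic-mode subspace.

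I expect the main obstacle to lie not in the algebra but in justifying the two inversions. For ${\rm diag}\{\bm{m}_R\}$ I would note that each significance coefficient $m_{R,i}=1/(1+{\rm j}\lambda_{R,i})$ is nonzero for every finite eigenvalue $\lambda_{R,i}$, so the diagonal matrix is nonsingular. For the Gram matrix the key point is that the $n_R$ characteristic-mode electric fields $\{\bm{e}_{R,i}\}$ are linearly independent—which follows from the orthogonality of the underlying characteristic-mode currents from which they are radiated—so that $\overline{\mathbf{E}}_R\in\mathbb{C}^{3N_R\times n_R}$ has full column rank $n_R$ (with $n_R\le 3N_R$) and $\overline{\mathbf{E}}_R^H\overline{\mathbf{E}}_R$ is Hermitian positive definite, hence invertible. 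Making this linear-independence argument cleanly—carefully distinguishing the \emph{non}-orthonormality of the field basis $\{\bm{e}_{R,i}\}$ from the orthonormality of the current basis—is the one step requiring care, and it is exactly why the Gram-matrix factor $(\overline{\mathbf{E}}_R^H\overline{\mathbf{E}}_R)^{-1}$, rather than a bare conjugate transpose, must appear in the final formula.
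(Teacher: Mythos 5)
Your proof is correct and takes essentially the same route as the paper's: project $\bm{e}_R$ onto the column space of $\overline{\mathbf{E}}_R$ so the residual is annihilated, then invert the Gram matrix and ${\rm diag}\{\bm{m}_R\}$; your explicit justification of why both inversions are legitimate (nonzero $m_{R,i}$, full column rank of $\overline{\mathbf{E}}_R$) is the only substantive addition to the paper's terser argument. One remark: what you and the paper's appendix both actually derive is $\bm{\beta}_R={\rm diag}\{\bm{m}_R\}^{-1}(\overline{\mathbf{E}}_R^H\overline{\mathbf{E}}_R)^{-1}\overline{\mathbf{E}}_R^H\bm{e}_R$, which is the dimensionally consistent form; the extra left factor $\overline{\mathbf{E}}_R$ appearing in the lemma statement is evidently a typo, as confirmed by its later use as $\overline{\mathbf{E}}_R^\dagger$ in the receive signal model.
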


\begin{proof}
	See Appendix~\ref{prop::rx}.
\end{proof}

The output EM signals $\bm{s}_R$ is related to the $\bm{\beta}_R$ via the matrix of modal excitation coefficient $\mathbf{V}_R$, which is denoted as
$\bm{\beta}_R = \mathbf{V}_R\bm{s}_R$. Since the number of characteristic modes $n_R$ does not necessarily equal the number of antenna ports, $\bm{s}_R$ is given as
\begin{equation}
	\bm{s}_R = (\mathbf{V}_R^H\mathbf{V}_R)^{-1}\mathbf{V}_R^H\bm{\beta}_R.\label{equ::rxbvs} 
\end{equation}

By combining (\ref{equ::rxbeta}) and (\ref{equ::rxbvs}), the process of how $\bm{e}_R$ generates $\bm{s}_R$, which is shown in Fig.\ref{fig::rx}, is given as
\begin{equation}
	\bm{s}_R=\mathbf{V}_R^{\dagger} {\rm diag}\{\bm{m}_R\}^{-1}\overline{\mathbf{E}}_R^\dagger \bm{e}_{R},\label{equ::ecma}
\end{equation}
where $(\cdot)^{\dagger}$ denotes the pseudo-inverse operator. Given the electric field and the significance coefficients of all characteristic modes, the radiation properties of the receive antenna are modeled via CMA. The modal excitation coefficient accounts for the antenna's excitation properties. By combining (\ref{equ::jcma}), (\ref{equ::discGreen}), and (\ref{equ::ecma}), the signal model for the CMA-modeled MIMO system is given as
\begin{equation}
	\bm{s}_R=\mathbf{V}_R^{\dagger} {\rm diag}\{\bm{m}_R\}^{-1}\overline{\mathbf{E}}_R^\dagger\mathbf{G}\overline{\mathbf{J}}_T {\rm diag}\{\bm{m}_T\}\mathbf{V}_T\bm{s}_T+\bm{n},\label{equ::CMAsignalmodel}
\end{equation}
where the LoS wireless channel is denoted with discretized dyadic Green function $\mathbf{G}$. The excitation and radiation properties of the transmit and receive antennas are characterized with CMA, i.e.,  $\overline{\mathbf{J}}_T {\rm diag}\{\bm{m}_T\}\mathbf{V}_T$ and $\mathbf{V}_R^{\dagger} {\rm diag}\{\bm{m}_R\}^{-1}\overline{\mathbf{E}}_R^\dagger$, respectively.

\begin{figure}[t]
	\centerline{\includegraphics[width=9.2cm,height=5.6cm]{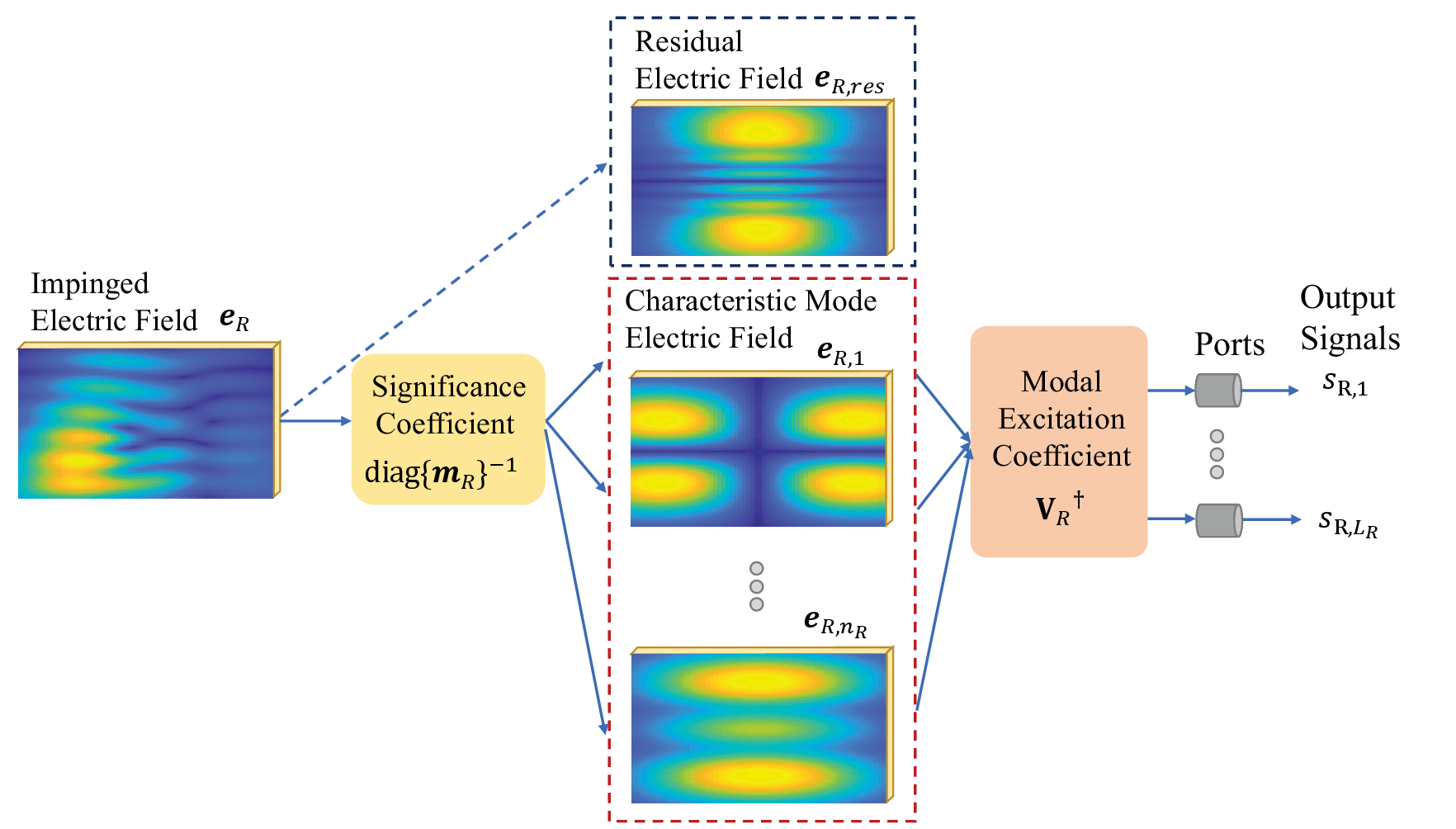}}
	\caption{An illustration of the receive antenna showing how the impinged electric field produces output signals based on CMA.}
	\label{fig::rx}
\end{figure}

\section{Degrees of Freedom Analysis} \label{sec::dofa}
In this section, we first present the degrees of freedom (DoF) based on the analysis of the dyadic Green function, and then analyze the DoF of the CMA-modeled MIMO communication system. Finally, we discuss the relationship between the CMA-modeled and conventional MIMO systems.

\subsection{Conventional Dyadic Green Function-Based DoF}
Most existing research analyzes the DoF of the MIMO system based solely on the wireless channel modeled using dyadic Green function\cite{commode,multidof,af}. Specifically, the dyadic Green function is approximated by the outer product of two sets of orthogonal continuous functions as
\begin{equation}
	\overline{\overline{\mathbf{G}}}(\bm{r},\bm{r}') \approx \sum_{n=1}^{D}\lambda_{n}v_{R,n}(\bm{r})\otimes v_{T,n}(\bm{r}')^H,\label{equ::svd}
\end{equation}
where $\otimes$ denotes the outer product operator, $v_{R,n}(\bm{r})$ is the $n$-th orthogonal electric field of the receive antenna, $v_{T,n}(\bm{r}')$ is the $n$-th orthogonal surface current of the transmit antenna, and $D$ denotes the number of orthogonal continuous functions\footnote{The term $D$ is set to satisfy a certain level of accuracy of decomposition, i.e., $\int_{A_T}\int_{A_R}{||\overline{\overline{\mathbf{G}}}(\bm{r},\bm{r}')-\sum_{n=1}^{D}\lambda_{n}v_{R,n}(\bm{r})\otimes v_{T,n}(\bm{r}')^H||^2d\bm{r}d\bm{r}'}\leq \epsilon$, where $\epsilon$ is the acceptable error, $A_R$ is the region of the surface of the receive antenna.}. Each pair of orthogonal functions forms a subchannel for independent data transmission. 

Based on the decomposition of the dyadic Green function, the degrees of freedom of the wireless channel are defined as the number of orthogonal continuous functions, i.e., $DoF_{\mathbf{G}} = D$. To numerically compute $DoF_{\mathbf{G}}$, $v_{R,n}(\bm{r})$ and $v_{T,n}(\bm{r}'), n=1,2,..., DoF_{\mathbf{G}}$, the singular value decomposition (SVD) is applied to the discretized dyadic Green function $\mathbf{G}$, which is denoted as

\begin{equation}
	\mathbf{G}=\mathbf{V}_T\mathbf{\Sigma}\mathbf{V}_R^H,\label{equ::svdg}
\end{equation}
where the singular vectors constituting $\mathbf{V}_T$ and $\mathbf{V}_R$ correspond to the discretized orthogonal surface current and the electric field.
In this way, $DoF_{\mathbf{G}}$ is given as
\begin{equation}
	DoF_{\mathbf{G}}={\rm Rank}(\mathbf{GG}^H).\label{equ::rankg}
\end{equation}

\subsection{Achievable DoF Analysis Based on CMA Theory}\label{subsec::adof}

DoF obtained via SVD on $\mathbf{G}$ in (\ref{equ::svdg}) gives the upper bound of the DoF of the MIMO system by assuming that the antennas can generate arbitrary distributions of electric field and surface current.
In reality, however, constrained by the physical structure and the EM theory, antennas cannot produce arbitrary current and electric field distributions, thus limiting the number of achievable DoF. 
Therefore, we propose the concept of the achievable DoF of the MIMO system, which takes the EM property of antennas into account. To this end, the CMA of antennas is applied to investigate the achievable DOF of the MIMO system.

\subsubsection{Achievable DoF Derivation}
To obtain the achievable DoF, we first define the equivalent channel $\tilde{\mathbf{H}}$, which depicts the coupling relationship between the input and output EM signals, considering the impact of the transceiver antennas. Based on the signal model in (\ref{equ::CMAsignalmodel}), $\tilde{\mathbf{H}}$ can be given as
\begin{equation}
	\tilde{\mathbf{H}} = \mathbf{V}_R^{\dagger} {\rm diag}\{\bm{m}_R\}^{-1}\overline{\mathbf{E}}_R^\dagger\mathbf{G}\overline{\mathbf{J}}_T {\rm diag}\{\bm{m}_T\}\mathbf{V}_T,\label{equ::equH}
\end{equation}
based on which the achievable DoF of the MIMO system is given as
\begin{equation}
	DoF_{\mathbf{H}} ={\rm Rank}(\mathbf{\tilde{H}\tilde{H}}^H).\label{equ::dofdef}
\end{equation}
\begin{proposition}\label{prop::dofsimp}
	\rm The achievable DoF can be simplified as 
\begin{equation}
	DoF_{\mathbf{H}} ={\rm Rank}(\mathbf{V}_R^{\dagger} {\rm diag}\{\bm{m}_R\}^{-1}\bm{\Gamma} {\rm diag}\{\bm{m}_T\}\mathbf{V}_T),\label{equ::dofsimp}
\end{equation}
where $\bm{\Gamma} \in \mathbb{C}^{n_R\times n_T}$ and $[\bm{\Gamma}]_{ij}$ represents the coupling strength between the electric field of the $i$-th characteristic field of the receive antenna and the surface current of the $j$-th characteristic field of the transmit antenna. The term $\bm{\Gamma}$ is obtained via a decomposition to the discrete wireless channel $\mathbf{G}$ based on the electric field and current of the characteristic modes, i.e., $\overline{\mathbf{E}}_R$ and $\overline{\mathbf{J}}_T$, which is denoted as
\begin{equation}
	\mathbf{G} = \overline{\mathbf{E}}_R\bm{\Gamma}\overline{\mathbf{J}}_T^T.\label{equ::decompG}
\end{equation}
\end{proposition}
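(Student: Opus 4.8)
The plan is to leave the two outer factors of $\tilde{\mathbf{H}}$ untouched and show that the central triple product $\overline{\mathbf{E}}_R^\dagger\mathbf{G}\overline{\mathbf{J}}_T$ collapses exactly to $\bm{\Gamma}$, after which the stated rank identity is immediate. First I would dispose of the Hermitian square: for any matrix $\mathbf{A}$ one has ${\rm Rank}(\mathbf{A}\mathbf{A}^H)={\rm Rank}(\mathbf{A})$, since $\mathbf{A}\mathbf{A}^H\bm{x}=\bm{0}$ forces $\|\mathbf{A}^H\bm{x}\|^2=\bm{x}^H\mathbf{A}\mathbf{A}^H\bm{x}=0$ and hence $\mathbf{A}^H\bm{x}=\bm{0}$, so $\mathbf{A}\mathbf{A}^H$ and $\mathbf{A}^H$ share the same null space and ${\rm Rank}(\mathbf{A}\mathbf{A}^H)={\rm Rank}(\mathbf{A}^H)={\rm Rank}(\mathbf{A})$. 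Applying this with $\mathbf{A}=\tilde{\mathbf{H}}$ reduces the definition (\ref{equ::dofdef}) to ${\rm Rank}(\tilde{\mathbf{H}})$, so it remains only to bring $\tilde{\mathbf{H}}$ into the form appearing in (\ref{equ::dofsimp}).

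Next I would substitute the decomposition (\ref{equ::decompG}), $\mathbf{G}=\overline{\mathbf{E}}_R\bm{\Gamma}\overline{\mathbf{J}}_T^T$, into the definition (\ref{equ::equH}), so that the central factor becomes $\overline{\mathbf{E}}_R^\dagger\mathbf{G}\overline{\mathbf{J}}_T=\overline{\mathbf{E}}_R^\dagger\overline{\mathbf{E}}_R\,\bm{\Gamma}\,\overline{\mathbf{J}}_T^T\overline{\mathbf{J}}_T$. Two orthonormality relations then do all the work. Because the characteristic-mode currents are normalized and mutually orthogonal, the per-element identity $\bm{j}_{T,l}^T\bm{j}_{T,i}=\delta_{l,i}$ assembles columnwise into $\overline{\mathbf{J}}_T^T\overline{\mathbf{J}}_T=\mathbf{I}_{n_T}$. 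Because the characteristic-mode electric fields $\bm{e}_{R,1},\dots,\bm{e}_{R,n_R}$ are linearly independent, $\overline{\mathbf{E}}_R$ has full column rank, so its pseudo-inverse is the left inverse $\overline{\mathbf{E}}_R^\dagger=(\overline{\mathbf{E}}_R^H\overline{\mathbf{E}}_R)^{-1}\overline{\mathbf{E}}_R^H$ already invoked in Lemma~\ref{prop::rx}, giving $\overline{\mathbf{E}}_R^\dagger\overline{\mathbf{E}}_R=\mathbf{I}_{n_R}$. Substituting both yields $\overline{\mathbf{E}}_R^\dagger\mathbf{G}\overline{\mathbf{J}}_T=\bm{\Gamma}$, hence $\tilde{\mathbf{H}}=\mathbf{V}_R^{\dagger}{\rm diag}\{\bm{m}_R\}^{-1}\bm{\Gamma}\,{\rm diag}\{\bm{m}_T\}\mathbf{V}_T$, and combining with the first step gives (\ref{equ::dofsimp}).

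The step I expect to be the main obstacle is not the algebra but justifying that (\ref{equ::decompG}) may be used as an exact equality. Since $\overline{\mathbf{E}}_R\bm{\Gamma}\overline{\mathbf{J}}_T^T$ has rank at most $\min(n_R,n_T)$, it can reproduce $\mathbf{G}$ only on the column space of $\overline{\mathbf{E}}_R$ and the row space of $\overline{\mathbf{J}}_T$; any component of $\mathbf{G}$ lying outside these subspaces is not captured. I would therefore take $\bm{\Gamma}:=\overline{\mathbf{E}}_R^\dagger\mathbf{G}(\overline{\mathbf{J}}_T^T)^\dagger$ as the working definition and argue it discards nothing relevant to the achievable DoF: the transmit current always lies in ${\rm span}(\overline{\mathbf{J}}_T)$, and by the construction preceding Lemma~\ref{prop::rx} the receiver rejects the residual field $\bm{e}_{R,res}$ orthogonal to ${\rm span}(\overline{\mathbf{E}}_R)$, so the operators flanking $\mathbf{G}$ in $\tilde{\mathbf{H}}$ already restrict it to precisely these subspaces. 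Consequently, replacing $\mathbf{G}$ by its projection $\overline{\mathbf{E}}_R\bm{\Gamma}\overline{\mathbf{J}}_T^T$ leaves $\overline{\mathbf{E}}_R^\dagger\mathbf{G}\overline{\mathbf{J}}_T$ invariant. Finally, noting that the characteristic-mode currents are real with $\overline{\mathbf{J}}_T^T\overline{\mathbf{J}}_T=\mathbf{I}_{n_T}$, one has $(\overline{\mathbf{J}}_T^T)^\dagger=\overline{\mathbf{J}}_T$, so this $\bm{\Gamma}$ coincides with the coefficient matrix in (\ref{equ::decompG}) and the two descriptions agree, which closes the argument.
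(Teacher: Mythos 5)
Your proposal follows essentially the same route as the paper's proof: reduce ${\rm Rank}(\tilde{\mathbf{H}}\tilde{\mathbf{H}}^H)$ to ${\rm Rank}(\tilde{\mathbf{H}})$, substitute $\mathbf{G}=\overline{\mathbf{E}}_R\bm{\Gamma}\overline{\mathbf{J}}_T^T$ into (\ref{equ::equH}), and collapse the central factor using $\overline{\mathbf{J}}_T^T\overline{\mathbf{J}}_T=\mathbf{I}_{n_T}$ together with the left-inverse property $\overline{\mathbf{E}}_R^\dagger\overline{\mathbf{E}}_R=\mathbf{I}_{n_R}$. Your closing discussion of why (\ref{equ::decompG}) cannot be an exact equality (the right-hand side has rank at most $\min(n_R,n_T)$) and your fix via $\bm{\Gamma}:=\overline{\mathbf{E}}_R^\dagger\mathbf{G}(\overline{\mathbf{J}}_T^T)^\dagger$ is a genuine improvement in rigor over the paper, which only signals this issue implicitly by switching to $\approx$ in its final steps.
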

\begin{proof}
	See Appendix \ref{app::prop4}.
\end{proof}

\subsubsection{Achievable DoF Analysis}
\label{rmk::gamma} Based on (\ref{equ::dofsimp}), the following analysis on $DoF_{\mathbf{H}}$ is given.

\begin{remark}
The achievable DoF of the MIMO system $DoF_{\mathbf{H}}$ is restricted both by the number of ports of the transceiver antennas and the number of characteristic modes of the transceiver antennas. Hence, we have
\begin{equation}
DoF_{\mathbf{H}} \leq \min\{L_T,L_R,n_T,n_R\}.
\end{equation}
Therefore, the number of data streams available for independent data transmission is no larger than the number of ports of antennas.
\end{remark}

\begin{proposition}\label{prop::dofrelation}
	\rm The relationship between the achievable DoF of the MIMO system, i.e., $DoF_{\mathbf{H}}$, and the DoF of the free space LoS wireless channel, i.e., $DoF_{\mathbf{G}}$, can be denoted as
	\begin{equation}
		DoF_{\mathbf{H}} \leq DoF_{\mathbf{G}},\label{equ::dofleq}
	\end{equation}
	The achievable DoF of the MIMO system also satisfies 
	\begin{equation}
		DoF_{\mathbf{H}} \geq {\rm Rank(\mathbf{V}_R)}+{\rm Rank(\mathbf{V}_T)}+{\rm Rank(\mathbf{\Gamma})}-n_R -n_T,\label{equ::dofgeq}
	\end{equation}
\end{proposition}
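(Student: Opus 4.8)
The plan is to derive both inequalities directly from the factored expression for the achievable DoF established in Proposition~\ref{prop::dofsimp}, namely $DoF_{\mathbf{H}}={\rm Rank}(\mathbf{V}_R^{\dagger} {\rm diag}\{\bm{m}_R\}^{-1}\bm{\Gamma} {\rm diag}\{\bm{m}_T\}\mathbf{V}_T)$, by repeatedly invoking standard rank inequalities. Two preliminary facts will be used throughout. First, the significance matrices ${\rm diag}\{\bm{m}_R\}^{-1}$ and ${\rm diag}\{\bm{m}_T\}$ are invertible, since each entry $m_{R,i}=1/(1+{\rm j}\lambda_{R,i})$ and $m_{T,i}=1/(1+{\rm j}\lambda_{T,i})$ is nonzero; hence left- or right-multiplication by them preserves rank. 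Second, $\mathbf{V}_R^{\dagger}$ has the same rank as $\mathbf{V}_R$. Consequently, writing the rank argument as a triple product $\mathbf{A}\bm{\Gamma}\mathbf{C}$ with $\mathbf{A}=\mathbf{V}_R^{\dagger}{\rm diag}\{\bm{m}_R\}^{-1}\in\mathbb{C}^{L_R\times n_R}$ and $\mathbf{C}={\rm diag}\{\bm{m}_T\}\mathbf{V}_T\in\mathbb{C}^{n_T\times L_T}$, we have ${\rm Rank}(\mathbf{A})={\rm Rank}(\mathbf{V}_R)$ and ${\rm Rank}(\mathbf{C})={\rm Rank}(\mathbf{V}_T)$.

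For the upper bound~(\ref{equ::dofleq}), I would first use the product-rank bound ${\rm Rank}(\mathbf{A}\bm{\Gamma}\mathbf{C})\le{\rm Rank}(\bm{\Gamma})$, so that $DoF_{\mathbf{H}}\le{\rm Rank}(\bm{\Gamma})$. It then remains to identify ${\rm Rank}(\bm{\Gamma})$ with $DoF_{\mathbf{G}}={\rm Rank}(\mathbf{G})$. This follows from the decomposition~(\ref{equ::decompG}), $\mathbf{G}=\overline{\mathbf{E}}_R\bm{\Gamma}\overline{\mathbf{J}}_T^T$: because the characteristic mode currents and fields are orthonormal, $\overline{\mathbf{J}}_T$ has full column rank $n_T$ (so $\overline{\mathbf{J}}_T^T$ has full row rank) and $\overline{\mathbf{E}}_R$ has full column rank $n_R$. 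Since left-multiplication by a full-column-rank matrix and right-multiplication by a full-row-rank matrix both preserve rank, ${\rm Rank}(\mathbf{G})={\rm Rank}(\bm{\Gamma})$, which yields $DoF_{\mathbf{H}}\le{\rm Rank}(\bm{\Gamma})=DoF_{\mathbf{G}}$.

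For the lower bound~(\ref{equ::dofgeq}), I would apply Sylvester's rank inequality twice to $\mathbf{A}\bm{\Gamma}\mathbf{C}$. Grouping as $(\mathbf{A}\bm{\Gamma})\mathbf{C}$ with inner dimension $n_T$ gives ${\rm Rank}(\mathbf{A}\bm{\Gamma}\mathbf{C})\ge{\rm Rank}(\mathbf{A}\bm{\Gamma})+{\rm Rank}(\mathbf{C})-n_T$, and applying the same inequality to $\mathbf{A}\bm{\Gamma}$ with inner dimension $n_R$ gives ${\rm Rank}(\mathbf{A}\bm{\Gamma})\ge{\rm Rank}(\mathbf{A})+{\rm Rank}(\bm{\Gamma})-n_R$. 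Substituting the second bound into the first and replacing ${\rm Rank}(\mathbf{A})$ and ${\rm Rank}(\mathbf{C})$ by ${\rm Rank}(\mathbf{V}_R)$ and ${\rm Rank}(\mathbf{V}_T)$ reproduces exactly $DoF_{\mathbf{H}}\ge{\rm Rank}(\mathbf{V}_R)+{\rm Rank}(\mathbf{V}_T)+{\rm Rank}(\bm{\Gamma})-n_R-n_T$.

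The repeated rank manipulation is routine; the step warranting the most care is the identification ${\rm Rank}(\bm{\Gamma})=DoF_{\mathbf{G}}$, which relies on the orthonormality (hence full column rank) of $\overline{\mathbf{E}}_R$ and $\overline{\mathbf{J}}_T$ and on the decomposition~(\ref{equ::decompG}) holding exactly rather than merely approximately. I would therefore make explicit that the left inverse $(\overline{\mathbf{E}}_R^H\overline{\mathbf{E}}_R)^{-1}\overline{\mathbf{E}}_R^H$ and the analogous right inverse for $\overline{\mathbf{J}}_T^T$ exist, so that $\bm{\Gamma}$ is recoverable from $\mathbf{G}$ without rank loss; this simultaneously secures the equality needed for the upper bound and confirms that the ${\rm Rank}(\bm{\Gamma})$ appearing in the lower bound is the same quantity as $DoF_{\mathbf{G}}$.
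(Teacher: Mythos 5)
Your proof of the lower bound (\ref{equ::dofgeq}) is correct and lands in essentially the same place as the paper: the paper uses the Frobenius inequality once and Sylvester twice, while your double application of Sylvester to $(\mathbf{A}\bm{\Gamma})\mathbf{C}$ and $\mathbf{A}\bm{\Gamma}$ is an equivalent (arguably cleaner) bookkeeping of the same bound, and your identification ${\rm Rank}(\mathbf{A})={\rm Rank}(\mathbf{V}_R^{\dagger})={\rm Rank}(\mathbf{V}_R)$ via invertibility of the diagonal significance matrices is exactly the paper's closing step.

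The upper bound (\ref{equ::dofleq}), however, contains a genuine gap: the identification ${\rm Rank}(\bm{\Gamma})=DoF_{\mathbf{G}}$ is false in general. Your argument that full column rank of $\overline{\mathbf{E}}_R$ and full row rank of $\overline{\mathbf{J}}_T^T$ preserve rank is valid only if the factorization $\mathbf{G}=\overline{\mathbf{E}}_R\bm{\Gamma}\overline{\mathbf{J}}_T^T$ holds \emph{exactly}, but that product has rank at most $\min\{n_R,n_T\}$, whereas $\mathbf{G}\in\mathbb{C}^{3N_R\times3N_T}$ can have $DoF_{\mathbf{G}}>\min\{n_R,n_T\}$; the decomposition (\ref{equ::decompG}) is in fact a projection of $\mathbf{G}$ onto the characteristic-mode subspaces (note the approximation sign in the paper's Appendix B), and the paper explicitly emphasizes that $\bm{\Gamma}$ may be rank-deficient with ${\rm Rank}(\bm{\Gamma})\leq\min\{n_R,n_T\}$ --- indeed, were your equality true, one could conclude $DoF_{\mathbf{G}}\leq\min\{n_R,n_T\}$ always, contradicting the very phenomenon the proposition is meant to capture. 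Your conclusion is nevertheless salvageable in two ways: (i) keep only the direction you actually need, ${\rm Rank}(\bm{\Gamma})\leq{\rm Rank}(\mathbf{G})$, which follows from the recoverability relation $\bm{\Gamma}=\overline{\mathbf{E}}_R^{\dagger}\mathbf{G}(\overline{\mathbf{J}}_T^T)^{\dagger}$ and the fact that multiplication cannot increase rank; or (ii) do as the paper does and apply the product-rank bound to the \emph{exact} unsimplified expression (\ref{equ::equH}), where $\mathbf{G}$ itself appears as a factor, giving $DoF_{\mathbf{H}}\leq{\rm Rank}(\mathbf{G})=DoF_{\mathbf{G}}$ directly without ever passing through $\bm{\Gamma}$.
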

\begin{proof}
	See Appendix \ref{app::prop1}.
\end{proof}

The inequality (\ref{equ::dofleq}) implies that the DoF given by the SVD of $\mathbf{G}$ gives an upper bound on the achievable DoF of the MIMO system. This is because the influence of the transceiver antennas may reduce the rank of the equivalent channel matrix, as reflected in (\ref{equ::equH}). Besides, it can be inferred from (\ref{equ::dofgeq}) that if antenna ports are effective in exciting a large number of characteristic modes, leading to a high rank of the modal excitation coefficient matrix $\mathbf{V}_R$ and $\mathbf{V}_T$, a large achievable DoF can be obtained. Also, to increase $DoF_{\mathbf{H}}$, the rank of $\mathbf{\mathbf{\Gamma}}$ should be maximized. 

Moreover, the difference between the SVD of $\mathbf{G}$ in (\ref{equ::svd}) and the decomposition based on characteristic modes in (\ref{equ::decompG}) is discussed here. Unlike the singular value matrix $\mathbf{\Sigma}$, $\bm{\Gamma}$ may not be a diagonal matrix or a full-rank matrix, i.e., ${\rm Rank}(\mathbf{\Gamma}) \leq \min \{n_R,n_T\}$. This is because the characteristic modes are related to the impedance matrix of the transceiver antennas, i.e., $\mathbf{Z}_T$ and $\mathbf{Z}_R$, which are determined by the antennas' inherent property, e.g., radiating structure, material, and element spacing. Hence, the current and electric field of characteristic modes are not arbitrarily generated like $v_{R,n}(\bm{r})$ and $v_{T,n}(\bm{r}')$ obtained from the SVD of $\mathbf{G}$. A potential mismatch between the characteristic modes between the transceiver antennas may also lead to the rank deficiency of $\bm{\Gamma}$.


\subsection{Relationship between CMA-Aware MIMO and Conventional MIMO}

\begin{figure}[t]
	\centerline{\includegraphics[width=6.5cm,height=3.4cm]{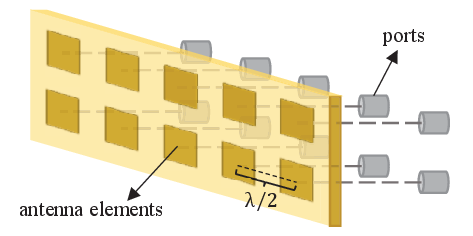}}
	\caption{An illustration of the discrete antenna array in the conventional MIMO system, where each antenna element is separately connected to a port.}
	\label{fig::conventionalmimo}
\end{figure}

Here, the relationship between the CMA-modeled MIMO and the conventional MIMO is discussed to unveil that the conventional MIMO, i.e. antenna elements spaced at half a wavelength\cite{convmimo}, is a special case of the proposed CMA-modeled MIMO. Specifically, by utilizing the property of conventional MIMO, the signal model of the CMA-modeled MIMO in (\ref{equ::CMAsignalmodel}) can be converted to that of the conventional MIMO, as illustrated below.

 For the discrete antenna array adopted in the conventional MIMO system, as illustrated in Fig.~\ref{fig::conventionalmimo}, the number of ports is the same as that of antenna elements. In this way, the surface current can be rewritten as the concatenation of the surface current of each antenna element as 
	\begin{equation}
		\bm{j}_T = [\bm{j}_{e,T,1},\bm{j}_{e,T,2},...,\bm{j}_{e,T,L_T}],
	\end{equation}
	where $\bm{j}_{e, T, l} \in \mathbb{C}^{3N_{e, T}\times1}$ denotes the surface current of $l$-th antenna element, and $N_{e, T} = \frac{N_T}{L_T}$ is the number of divided surface element of each transmit antenna element. 
	
\begin{lemma} \label{lem::txu}
\rm 
For the transmitter of the conventional MIMO system, the discrete antenna array is equipped with identical antenna elements and the element spacing is $\lambda/2$. In this way, the mapping relation $\mathbf{U}_T$ between $\bm{s}_T$ and $\bm{j}_T$ can be converted from the CMA-based modeling, i.e., $\mathbf{U}_T = \overline{\mathbf{J}}_T {\rm diag}\{\bm{m}_T\}\mathbf{V}_T$, to
\begin{equation}
\mathbf{U}_T ={\rm blkdiag}\{\overbrace{\hat{\bm{j}}_{e,T},\hat{\bm{j}}_{e,T},...,\hat{\bm{j}}_{e,T}}^{L_T}\},\label{equ::ut1}
\end{equation}
where $\hat{\bm{j}}_{e,T}\in \mathbb{C}^{3N_{e, T}\times1}$ is the normalized surface current of an antenna element when it is excited with $s_T = 1$.
\end{lemma}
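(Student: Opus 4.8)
The plan is to exploit the two defining features of conventional MIMO---identical antenna elements and $\lambda/2$ element spacing---to show that the CMA operator $\overline{\mathbf{J}}_T {\rm diag}\{\bm{m}_T\}\mathbf{V}_T$ degenerates into a block-diagonal matrix with repeated blocks. The physical starting point is that half-wavelength spacing renders the elements mutually decoupled, so that the MoM impedance matrix $\mathbf{Z}_T$ (and hence $\mathbf{R}_T$ and $\mathbf{X}_T$) becomes block diagonal once the surface mesh is ordered element by element, with one $3N_{e,T}\times 3N_{e,T}$ block per element and all blocks identical by the identical-element assumption. First I would make this structural statement precise and then carry it through the rest of the derivation.

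Then I would track how this block structure propagates. Since $\mathbf{X}_T$ and $\mathbf{R}_T$ are block diagonal, the generalized eigenvalue equation $\mathbf{X}_T\tilde{\bm{j}}_{T,i}=\lambda_{T,i}\mathbf{R}_T\tilde{\bm{j}}_{T,i}$ decouples across blocks, so every characteristic mode current $\bm{j}_{T,i}$ is supported on exactly one element and vanishes on the others; because the blocks are identical, the $n_T$ modes split into $L_T$ congruent families, one per element. Next, because port $l$ feeds only element $l$ and the elements are decoupled, the modal excitation coefficient $V_{T,i,l}$ is nonzero only when mode $i$ lives on element $l$, so $\mathbf{V}_T$ inherits the same element-wise block structure. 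Substituting these into $\mathbf{U}_T=\overline{\mathbf{J}}_T {\rm diag}\{\bm{m}_T\}\mathbf{V}_T$, the product retains nonzero entries only on the diagonal blocks: exciting port $l$ with unit amplitude produces current solely on element $l$, equal to the modal superposition $\sum_{i}m_{T,i}V_{T,i,l}\bm{j}_{T,i}$ restricted to that element. Identifying this aggregated per-element current with the normalized single-element current $\hat{\bm{j}}_{e,T}$---which is identical for every $l$ by the identical-element assumption---yields exactly (\ref{equ::ut1}).

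The main obstacle I anticipate is the first step: justifying that $\lambda/2$ spacing makes $\mathbf{Z}_T$ block diagonal. Strictly speaking this is an electromagnetic idealization, since mutual coupling at half-wavelength spacing is small but not identically zero, so I would frame conventional MIMO as the limiting regime in which the inter-element blocks of $\mathbf{Z}_T$ are negligible and the elements radiate independently. Once decoupling is granted, the remaining steps are routine bookkeeping on how the block structure passes through the eigen-decomposition and the excitation matrix; the only care needed is to confirm that the per-element modal superposition is translation-invariant across elements, so that the repeated block $\hat{\bm{j}}_{e,T}$ is genuinely the same in every slot.
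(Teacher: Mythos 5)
Your proposal is correct and follows essentially the same route as the paper: both arguments hinge on the negligible mutual coupling at $\lambda/2$ spacing to reduce the CMA model to independent per-element excitation, then use the identical-element assumption to make every diagonal block equal to the same normalized current $\hat{\bm{j}}_{e,T}$. You simply justify the decoupling more explicitly (via the block-diagonal structure of $\mathbf{Z}_T$ propagating through the eigenvalue equation and $\mathbf{V}_T$), whereas the paper asserts independent excitation directly and writes the per-element modal superposition $\bm{j}_{e,T,l}=\overline{\mathbf{J}}_{T,l}\,{\rm diag}\{\bm{m}_{T,l}\}\bm{V}_{T,l}s_{T,l}$ before collapsing it to $\hat{\bm{j}}_{e,T,l}s_{T,l}$.
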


\begin{proof}
A half-wavelength element spacing is adopted for the discrete antenna array, as illustrated in Fig.~\ref{fig::conventionalmimo}, leading to a negligible mutual coupling effect between antenna elements\cite{mc}. Hence, 
each antenna element is considered independently excited by the signal fed from the port connecting with the antenna element. Based on (\ref{equ::jcma}), $\bm{j}_{e,T,l}$ is denoted as
\begin{equation}
	\bm{j}_{e,T,l}= \overline{\mathbf{J}}_{T,l} {\rm diag}\{\bm{m}_{T,l}\}\bm{V}_{T,l} s_{T,l},
\end{equation}
where $\overline{\mathbf{J}}_{T,l}$, $\bm{m}_{T,l}$, and $\bm{V}_{T,l}$ denote the characteristic mode current, significance coefficient, and modal weighting coefficient of the $l$-th antenna element.
Since multiple characteristic modes cannot be effectively manipulated for each antenna element with only one port of input signal, $\bm{j}_{e, T,l}$ can be further reduced as
\begin{equation}
	\bm{j}_{e,T,l} = \hat{\bm{j}}_{e,T,l} s_{T,l},\label{equ::elementc}
\end{equation}
where $\hat{\bm{j}}_{e,T,l}$ is the normalized surface current of the $l$-th antenna element when $s_{T,l}=1$.  If the antenna elements are identical to each other, we have $\hat{\bm{j}}_{e, T,1}=,...,=\hat{\bm{j}}_{e, T, L_T}=\hat{\bm{j}}_{e, T}$. By rewriting (\ref{equ::elementc}) into the matrix form as $\bm{j}_T = \mathbf{U}_T\bm{s}_T$, (\ref{equ::ut1}) can be obtained, completing the proof.
\end{proof}

\begin{lemma}\label{lem::rxu}
At the receiver side, the mapping relation $\mathbf{U}_R$ between the output signals $\bm{s}_R$ and the impinged electric field $\bm{e}_R$ can be converted from $\mathbf{U}_R =\mathbf{V}_R^{\dagger} {\rm diag}\{\bm{m}_R\}^{-1}\overline{\mathbf{E}}_R^\dagger$ to
\begin{equation}
	\mathbf{U}_R = {\rm blkdiag}\{\overbrace{\hat{\bm{e}}_{e,R},\hat{\bm{e}}_{e,R},...,\hat{\bm{e}}_{e,R}}^{L_R}\}^\dagger,\label{equ::ur1}
\end{equation}
where $\hat{\bm{e}}_{e,R} \in \mathbb{C}^{3N_{e,R}\times1},l=1,2...,L_R,$ denotes the normalized electric field of each receive antenna element when $s_R=1$. The term $N_{e, R} = \frac{N_R}{L_R}$ is the number of divided surface elements of each receive antenna element.
\end{lemma}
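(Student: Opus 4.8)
The plan is to mirror the transmitter argument of Lemma~\ref{lem::txu}, but carried out in the reverse (field-to-signal) direction, exploiting electromagnetic reciprocity between the excitation and reception of a single-port element. First I would invoke the half-wavelength element spacing to conclude, exactly as in Lemma~\ref{lem::txu}, that mutual coupling between receive elements is negligible\cite{mc}. The physical consequence I would extract is that the shared surface mesh of the receive array splits into $L_R$ disjoint element sub-meshes, so the impinged field $\bm{e}_R\in\mathbb{C}^{3N_R\times1}$ partitions into element-local blocks $\bm{e}_{e,R,l}\in\mathbb{C}^{3N_{e,R}\times1}$ occupying \emph{disjoint row supports}, and each element produces its output $s_{R,l}$ independently of the others.

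Second, I would characterize a single receive element. Restricting the general reception relation (\ref{equ::ecma}) to element $l$, its read-out is governed by its own $\overline{\mathbf{E}}_{R,l}$, $\bm{m}_{R,l}$ and $\mathbf{V}_{R,l}$. Because the element has a single port, only one effective characteristic mode can be read out, so---dually to the collapse $\bm{j}_{e,T,l}=\hat{\bm{j}}_{e,T,l}s_{T,l}$ used in Lemma~\ref{lem::txu}---the per-element map reduces to a projection of the local field onto a single normalized receiving characteristic $\hat{\bm{e}}_{e,R,l}$, i.e. $s_{R,l}=\hat{\bm{e}}_{e,R,l}^{\dagger}\bm{e}_{e,R,l}$. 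This is precisely the least-squares read-out already forced by the pseudo-inverses in (\ref{equ::rxbvs}) and (\ref{equ::ecma}), now specialized to a one-mode, one-port element. The identical-element assumption then sets $\hat{\bm{e}}_{e,R,1}=\cdots=\hat{\bm{e}}_{e,R,L_R}=\hat{\bm{e}}_{e,R}$.

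Third, I would assemble the per-element relations into matrix form. Writing $\bm{s}_R=\mathbf{U}_R\bm{e}_R$ and stacking $s_{R,l}=\hat{\bm{e}}_{e,R}^{\dagger}\bm{e}_{e,R,l}$ over $l$, the assembled map is block diagonal with every block equal to $\hat{\bm{e}}_{e,R}^{\dagger}$. Invoking the reciprocity interpretation, the natural physical object is the field-generation matrix ${\rm blkdiag}\{\hat{\bm{e}}_{e,R},\dots,\hat{\bm{e}}_{e,R}\}\in\mathbb{C}^{3N_R\times L_R}$ mapping a unit port signal to the field each element is matched to; since $3N_R>L_R$ the field-to-signal inversion is overdetermined and hence realized by the pseudo-inverse, yielding (\ref{equ::ur1}).

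The step I expect to be the main obstacle is the final algebraic identification---establishing that the pseudo-inverse factors through the block structure, i.e. that ${\rm blkdiag}\{\hat{\bm{e}}_{e,R},\dots,\hat{\bm{e}}_{e,R}\}^{\dagger}={\rm blkdiag}\{\hat{\bm{e}}_{e,R}^{\dagger},\dots,\hat{\bm{e}}_{e,R}^{\dagger}\}$, so that the assembled per-element projections coincide with the stated pseudo-inverse of the whole block-diagonal matrix. This identity is not automatic for pseudo-inverses, but it holds here because the blocks have mutually disjoint row supports---a direct consequence of the decoupled, non-overlapping element sub-meshes. That support-disjointness makes the columns of the block-diagonal matrix orthogonal, so $\overline{\mathbf{E}}^{H}\overline{\mathbf{E}}$ is diagonal and the pseudo-inverse $(\overline{\mathbf{E}}^{H}\overline{\mathbf{E}})^{-1}\overline{\mathbf{E}}^{H}$ distributes across blocks. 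I would make this support-disjointness the explicit load-bearing fact, after which (\ref{equ::ur1}) follows immediately.
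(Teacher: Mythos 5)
Your proposal is correct and follows essentially the same route as the paper, whose own proof is just the one-line remark that the argument of Lemma~\ref{lem::txu} carries over to the receive side given $\lambda/2$ spacing, identical elements, and one port per element. You in fact supply more detail than the paper does---in particular the observation that the disjoint element supports make the columns of the block-diagonal matrix orthogonal, so its pseudo-inverse factors blockwise---which is a worthwhile explicit justification of a step the paper leaves implicit.
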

\begin{proof}
A proof similar to the proof of Proposition~\ref{lem::txu} can be given, on account of the following properties of the discrete antenna array at the receiver side, i.e., $\lambda/2$ element spacing, identical antenna elements, and the independent connection between each antenna element and each port.
\end{proof}

	
Moreover, in the channel modeling of the conventional MIMO system, each antenna element is considered a point source for the transceiver discrete antenna arrays\cite{ps}. Hence, instead of applying the continuous Green function based on the EM theory, the wireless channel is modeled by a point-to-point version $\mathbf{\tilde{G}}\in\mathbb{C}^{L_R \times L_T}$, where $[\mathbf{\tilde{G}}]_{ij}$ denotes the channel between the $i$-th receive antenna element and the $j$-th transmit antenna element. 
	
By considering each antenna element as a point source, each antenna element's surface current and electric field are also reduced from vectors, i.e.,  $\hat{\bm{j}}_{e, T}$ and $\hat{\bm{e}}_{e, R}$, to complex scalars. Such scalars denote the element gain and are represented by $\rho_T$ and $\rho_R$ for the transceiver discrete antenna arrays, respectively. Therefore, $\mathbf{U}_T$ and $\mathbf{U}_R$ can be reduced from \textbf{Lemma}~\ref{lem::txu} and \textbf{Lemma}~\ref{lem::rxu} to 
\begin{align}
	&\mathbf{U}_T = \rho_T\mathbf{I}_{L_T},\label{equ::simput}\\
	&\mathbf{U}_R = \rho_R\mathbf{I}_{L_R}.\label{equ::simpur}
\end{align}
\begin{proposition}
Considering the property of discrete antenna arrays adopted in the conventional MIMO and the simplified point-to-point channel model, the signal model of the CMA-modeled MIMO system can be rewritten as 
\begin{equation}
	\bm{s}_R=\rho_T\rho_R\mathbf{\tilde{G}}\bm{s}_T+\bm{n},\label{equ::MIMOsm}
\end{equation}
which is in accordance with the signal model of the conventional MIMO. 
\end{proposition}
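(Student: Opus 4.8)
The plan is to reduce the general CMA-modeled signal model (\ref{equ::CMAsignalmodel}) to the conventional form by direct substitution of the conventional-array reductions already established in this subsection. First I would rewrite (\ref{equ::CMAsignalmodel}) in the compact form of (\ref{equ::sigmodel}), namely $\bm{s}_R=\mathbf{U}_R\mathbf{G}\mathbf{U}_T\bm{s}_T+\bm{n}$, by grouping the CMA factors into the transmit mapping $\mathbf{U}_T=\overline{\mathbf{J}}_T {\rm diag}\{\bm{m}_T\}\mathbf{V}_T$ and the receive mapping $\mathbf{U}_R=\mathbf{V}_R^{\dagger} {\rm diag}\{\bm{m}_R\}^{-1}\overline{\mathbf{E}}_R^\dagger$. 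This isolates the three objects that the conventional-MIMO assumptions simplify: the transmit mapping, the propagation channel, and the receive mapping.

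Second, I would import the point-source reductions (\ref{equ::simput}) and (\ref{equ::simpur}), which follow from Lemma~\ref{lem::txu} and Lemma~\ref{lem::rxu} together with the point-source model, giving $\mathbf{U}_T=\rho_T\mathbf{I}_{L_T}$ and $\mathbf{U}_R=\rho_R\mathbf{I}_{L_R}$. In parallel, I would replace the discretized dyadic Green function $\mathbf{G}\in\mathbb{C}^{3N_R\times3N_T}$ by the point-to-point channel $\mathbf{\tilde{G}}\in\mathbb{C}^{L_R\times L_T}$, whose $(i,j)$ entry is the scalar channel between the $j$-th transmit and $i$-th receive element, as dictated by treating each antenna element as a point source.

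Third, I would substitute these three reductions into the compact model. Since $\rho_T$ and $\rho_R$ are scalars that commute past the matrices and the identity factors are absorbed, the matrix product reduces to
\[
\bm{s}_R=\rho_R\mathbf{I}_{L_R}\,\mathbf{\tilde{G}}\,\rho_T\mathbf{I}_{L_T}\,\bm{s}_T+\bm{n}=\rho_T\rho_R\mathbf{\tilde{G}}\bm{s}_T+\bm{n},
\]
which is exactly (\ref{equ::MIMOsm}) and coincides with the conventional-MIMO signal model, completing the argument.

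The only non-trivial point, and the step I would treat most carefully, is the simultaneous dimensional collapse: replacing the $3N_R\times3N_T$ field-to-field operator $\mathbf{G}$ by the $L_R\times L_T$ element-to-element matrix $\mathbf{\tilde{G}}$ must be consistent with reducing the block-diagonal mappings of Lemma~\ref{lem::txu} and Lemma~\ref{lem::rxu} to the scaled identities. I would make this explicit by noting that the point-source assumption sends each per-element profile $\hat{\bm{j}}_{e,T}$, $\hat{\bm{e}}_{e,R}$ to a scalar gain while collapsing each $3N_{e,T}$- (resp.\ $3N_{e,R}$-) dimensional surface patch to a single radiating (resp.\ receiving) point, so that the dimensions of $\mathbf{U}_R\mathbf{G}\mathbf{U}_T$ contract coherently to $L_R\times L_T$; the remaining algebra is a one-line scalar factorization.
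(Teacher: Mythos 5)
Your proposal is correct and follows essentially the same route as the paper's proof: substitute the reductions $\mathbf{U}_T=\rho_T\mathbf{I}_{L_T}$, $\mathbf{U}_R=\rho_R\mathbf{I}_{L_R}$ from (\ref{equ::simput})--(\ref{equ::simpur}) together with the point-to-point channel $\mathbf{\tilde{G}}$ into the CMA-modeled signal model and factor out the scalars. Your explicit remark on the coherent dimensional collapse from $3N_R\times 3N_T$ to $L_R\times L_T$ is a point the paper leaves implicit, but it does not change the argument.
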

\begin{proof}
Note that for discrete antenna arrays in the conventional MIMO system, no mutual coupling effect is considered, and the excitation of each antenna element is independent of other elements. Besides, the point-to-point channel model is assumed, and thus, we have the simplified $\mathbf{U}_T$ and $\mathbf{U}_R$ given in (\ref{equ::simput}) and (\ref{equ::simpur}). Substituting the point-to-point channel model $\mathbf{\tilde{G}}$ and the the converted $\mathbf{U}_T$ and $\mathbf{U}_R$ into (\ref{equ::CMAsignalmodel}), (\ref{equ::MIMOsm}) is obtained.
\end{proof}
By comparing the signal model of the CMA-modeled MIMO and the conventional MIMO, i.e., (\ref{equ::CMAsignalmodel}) and (\ref{equ::MIMOsm}), it can be concluded that the CMA-modeled MIMO system is a more general representation by considering EM properties of both the antennas and wireless channel. 
This further verifies that, $DoF_{\mathbf{H}}$ is a precise evaluation of the DoF of the MIMO system, since it stems from the general CMA-modeled signal model.
	
	\section{Achievable DoF Maximization based on CMA Theory}\label{sec::optimization}
	In this section, we formulate the characteristic optimization problem to maximize the achievable DoF for the CMA-modeled MIMO system. Next, a case study is presented, where the transceiver employs a type of two-dimensional reconfigurable antenna, e.g., RHS, as antennas. A CMA-based genetic algorithm is proposed to solve the achievable DoF maximization problem of the case study by optimizing the RHS configuration, which serves as a guideline for the RHS configuration design.
	
	\subsection{Problem Formulation}
	Based on the analysis of achievable DoF in Section \ref{subsec::adof}, the property of characteristic modes has an influence on $DoF_{\mathbf{H}}$. Specifically, given the fixed free-space LoS wireless channel $\mathbf{G}$, the parameters of characteristic modes, including the characteristic mode current and electric field $\overline{\mathbf{J}}_T,\overline{\mathbf{E}}_R$, the significance coefficient $\bm{m}_T,\bm{m}_R$, and the modal weighting coefficients $\mathbf{V}_T,\mathbf{V}_R$, should be optimized to improve the achievable DoF. The achievable DoF maximization problem can be formulated as
	\begin{equation}
		\max_{\mathbf{V}_R, \bm{m}_R, \overline{\mathbf{E}}_R,\overline{\mathbf{J}}_T \bm{m}_T,\mathbf{V}_T}DoF_{\mathbf{H}}.\label{equ::op1}
	\end{equation}
	
	Note that the parameters of characteristic modes are determined by the inherent property of the transceiver antennas, as stated in Section~\textbf{\ref{rmk::gamma}}. Hence, the radiating structure of antennas should be optimized to maximize $DoF_{\mathbf{H}}$. 
	
	To effectively illustrate the proposed CMA-modeled MIMO system and present the optimization of characteristic modes, we consider a type of antenna called \emph{reconfigurable antennas}\cite{reconfigurableantenna}. Elements of reconfigurable antennas are equipped with active components, e.g., PIN diodes or varactors, to flexibly adjust the radiating structure of antennas to modify the EM property.   
	In this way, the characteristic modes of antennas can be reconfigured to fit the varying wireless propagation environment to solve (\ref{equ::op1}).
	
	\subsection{Case Study: Reconfigurable Holographic Surface}\label{subsec::rhs}
	
	\begin{figure}[t]
		\centerline{\includegraphics[width=9cm,height=5.7cm]{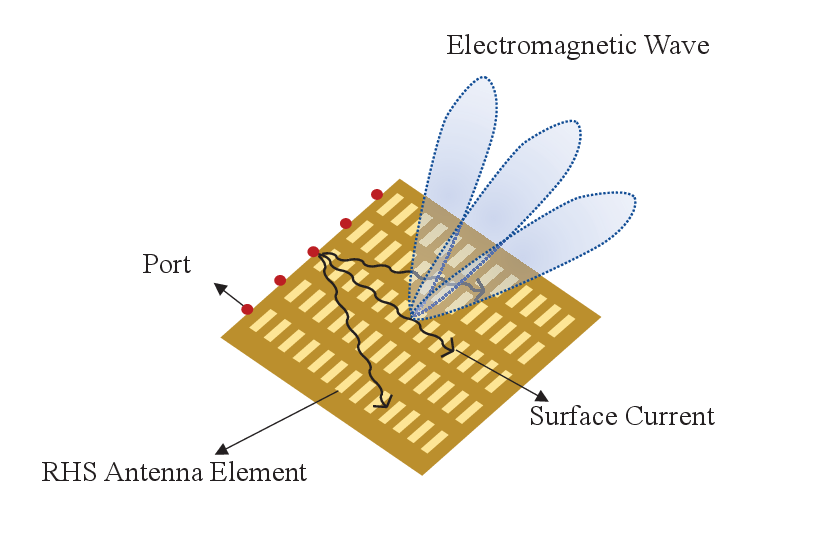}}
		\caption{The illustration of an RHS antenna.}
		\label{fig::rhs}
	\end{figure}
	
	It is noted that the antennas discussed in the CMA-modeled MIMO system can either have a conventional structure, where each antenna element is connected to one port, as illustrated in Fig.\ref{fig::conventionalmimo}, or have a hybrid structure when operating at high frequencies, where multiple antenna elements are excited by a single port.
For the sake of illustration, here we take a case of a reconfigurable antenna adopting the hybrid structure, e.g., RHS, as an example. By optimizing the achievable DoF, RHS-assisted MIMO communication with effective multiplexing can be realized, as shown below.
	
	As illustrated in Fig.~\ref{fig::rhs}, RHS is a type of metamaterial-based reconfigurable antenna that incorporates densely packed sub-wavelength metamaterial antenna elements\cite{rhs2}. Each antenna element is deployed with pin diodes to control the element's state. Owing to the changeable configuration of all antenna elements' states, the characteristic modes of the RHS can be controlled, further affecting the excited and received EM waves. Different from reconfigurable intelligent surfaces, the ports of the RHS are integrated with the meta-surface to generate EM waves propagating along the meta-surface and exciting the RHS elements one by one\footnote{The power of input signal at each port of RHS can be optimized to improve spectrum efficiency but is irrelevant to the achievable DoF.}.
	
	Here we consider the RHS at the transmitter and the receiver to have $L_T$ and $L_R$ ports, respectively, and each port corresponds to $N_{L, T}$ and $N_{L, R}$ antenna elements. Therefore, the transmit and receive RHS configuration can be denoted by $\bm{\phi}_T\in\mathbb{Z}^{L_T N_{L, T}\times 1}$ and $\bm{\phi}_R\in\mathbb{Z}^{L_R N_{L, R}\times 1}$, respectively, and each element in $\bm{\phi}_T$ and $\bm{\phi}_R$ is a 0-1 integer variable and denotes the state of the PIN diode. With RHS serving as antennas, the characteristic mode optimization problem in (\ref{equ::op1}) is reformulated as:
	\begin{align}
		&\max_{\bm{\phi}_T,\bm{\phi}_R}DoF_{\mathbf{H}}. \label{equ::op2}\\
		s.t.\ &\bm{\phi}_T\in \{0,1\}^{L_T N_{L,T}\times1},\tag{\ref{equ::op2}{a}}\\
		&\bm{\phi}_R\in \{0,1\}^{L_T N_{L,R}\times1},\tag{\ref{equ::op2}{b}}\\
		& \overline{\mathbf{J}}_T, \mathbf{V}_T, \bm{m}_T = f_T(\bm{\phi}_T),\tag{\ref{equ::op2}{c}}\\
		& \overline{\mathbf{E}}_R, \mathbf{V}_R, \bm{m}_R = f_R(\bm{\phi}_R),\tag{\ref{equ::op2}{d}}
	\end{align}
	where (\ref{equ::op2}{a}) and (\ref{equ::op2}{b}) denote the 1-bit constraints of the transmitter and the receiver RHS configuration. Functions $f_T(\cdot)$ and $f_R(\cdot)$ in (\ref{equ::op2}{c}) and (\ref{equ::op2}{d}) denote the mapping relation between the parameters of characteristic modes and the configuration of the transmitter and receiver RHS, respectively.
	
	\subsection{Algorithm Design}
	To solve (\ref{equ::op2}), we design a CMA-based genetic algorithm (CMA-GA) to optimize the  configuration of transmitter and receiver RHS, i.e.,  $\bm{\phi}_T$ and $\bm{\phi}_R$, considering that the mapping relation between the RHS configuration and the parameters of characteristic modes
	is nontrivial and challenging to be analytically obtained due to the complicated RHS structure.
	The proposed CMA-GA achievable DoF maximization algorithm is summarized in \textbf{Algorithm 1}. 
	
	\subsubsection{Initialization} The CMA-GA is first initialized in Steps 1-2. In Step 1, a number of RHS configurations are randomly generated, and the parameters of characteristic modes corresponding to these RHS configurations are obtained using CMA simulation tool in the CST Microwave Studio, an EM full-wave simulation software.
	
	In Step 2, the initial population of the CMA-GA is generated, which consists of $I$ transceiver RHS configurations $\bm{\phi}$. The term $\bm{\phi}$ is defined as the concatenation of the transmit and receive RHS configuration, which is denoted as
	\begin{equation}
		\bm{\phi} = [\bm{\phi}_T;\bm{\phi}_R],
	\end{equation}
	where $ \bm{\phi} \in \{0,1\}^{(L_T N_{L,T}+L_R N_{L,R})\times 1}$ denotes the variable to be optimized.
	\subsubsection{Generation}
	After the initialization process, $K$ generations are executed, as described from Step 3 to Step 9, in which the transceiver RHS configurations $\bm{\phi}$ undergo the crossover and mutation process for the population to evolve. Finally, $\bm{\phi}$ with the highest fitness score is generated so that the achievable DoF of the MIMO system is maximized.
	
	Here we define the fitness function for calculating the fitness score in CMA-GA. The physical meaning of $DoF_{\mathbf{H}}$ is the number of subchannels obtained from decomposing the equivalent channel matrix, and each subchannel corresponds to a singular value, which evaluates the data transmission ability. In practice, given the limited signal-to-noise ratio (SNR), subchannels with small singular values cannot support effective data transmission. Therefore, given the SVD result, the achievable DoF can be calculated as \cite{commode}
	\begin{equation}
		\sigma_1^2 \geq \sigma_2^2...\geq \sigma_{DoF_{\mathbf{H}}}^2\geq \gamma\sigma_1^2 ,\label{equ::numdof}\\
	\end{equation}
	where $\sigma_{l}$ is the $l$-th singular value of SVD on the equivalent channel $\tilde{\mathbf{H}}$, $\gamma$ is a predefined threshold related to SNR, and $L_m = \min\{L_T, L_R\}$ denotes the number of singular values.
	As inferred from (\ref{equ::numdof}), to increase the achievable DoF of the MIMO system, the singular values of the equivalent channel matrix should be evenly distributed. To this end, the \emph{fitness function} for evaluating the transceiver RHS configuration $\bm{\phi}$ is set as the negative value of the standard deviation of singular values, which is denoted as
	\begin{equation}
		f_{CMA}(\bm{\phi}) = -\sqrt{\frac{1}{L_m}\sum_{l=1}^{L_m}(\sigma_{l}-\overline\sigma)^2}, \label{equ::fit}
	\end{equation}
	where $\overline\sigma$ is the average singular value. A larger fitness score indicates that a more balanced distribution of singular values is obtained, which leads to a higher $DoF_{\mathbf{H}}$.

	
	In Step 3, $I_p$ configurations are selected from the population as the parent configuration based on probability and fitness scores\cite{geneticalgo}. In Step 4, given the chosen parent configurations, the child configurations are generated via the crossover and mutation processes. 
	In Step 5-7,  the parameters of characteristic modes are first obtained by utilizing CMA simulation tool, which are then used to calculate the equivalent channel via (\ref{equ::equH}). Next, the fitness score for each child configuration is calculated based on (\ref{equ::fit}). 
	In Step 8, the population is updated by retaining configurations with a high fitness score.
	
	Finally, after the iteration of generations is finished, the transceiver RHS configuration with the largest fitness score is chosen, representing the optimized configuration to produce the maximized $DoF_{\mathbf{H}}$. 
	\begin{algorithm}[t]
		\label{alg::cmaga}
		
		\caption{CMA-based Genetic Algorithm} 
		\vspace*{0.02in}\hspace*{0.02in} {\bf Input:} The structure of transceiver RHS antennas, the wireless channel $\mathbf{G}$, the number of generations $K$, the population size $I$. 
		\begin{algorithmic}[1]
			\State \textbf {Step 1.} Obtain parameters of characteristic modes of random RHS configurations with CMA tool in CST.
			\State \textbf{Step 2.} Initialize the population with $I$ random transceiver RHS configurations and calculate their fitness scores.
			\For{generation $= 0, 1, ..., K$}
			\State \textbf{Step 3.} Select $I_p$ parent configurations from the population based on probability.
			\State \textbf{Step 4.} Generate $I_p$ child configurations via crossover between each pair of parent configurations and mutation.
			\State \textbf{Step 5.} Obtain $\mathbf{V}_R, \bm{m}_R, \overline{\mathbf{E}}_R,\overline{\mathbf{J}}_T \bm{m}_T,\mathbf{V}_T$ for each child configuration with CMA tool in CST.
			\State \textbf{Step 6.} Calculate the equivalent channel $\tilde{\mathbf{H}}$ for each child configuration based on (\ref{equ::equH}).
			\State \textbf{Step 7.} Calculate the fitness score for each child configuration based on (\ref{equ::fit}). 
			\State \textbf{Step 8.} Update the population according to the fitness scores.
			\EndFor
			
		\end{algorithmic}
		\hspace*{0.02in} {\bf Output:} 
		Configuration of transceiver RHS antennas with the largest fitness score.
	\end{algorithm}
	
	\section{Simulation Results}\label{sec::sim}
\begin{figure}[t]
	\centerline{\includegraphics[width=5cm,height=5cm]{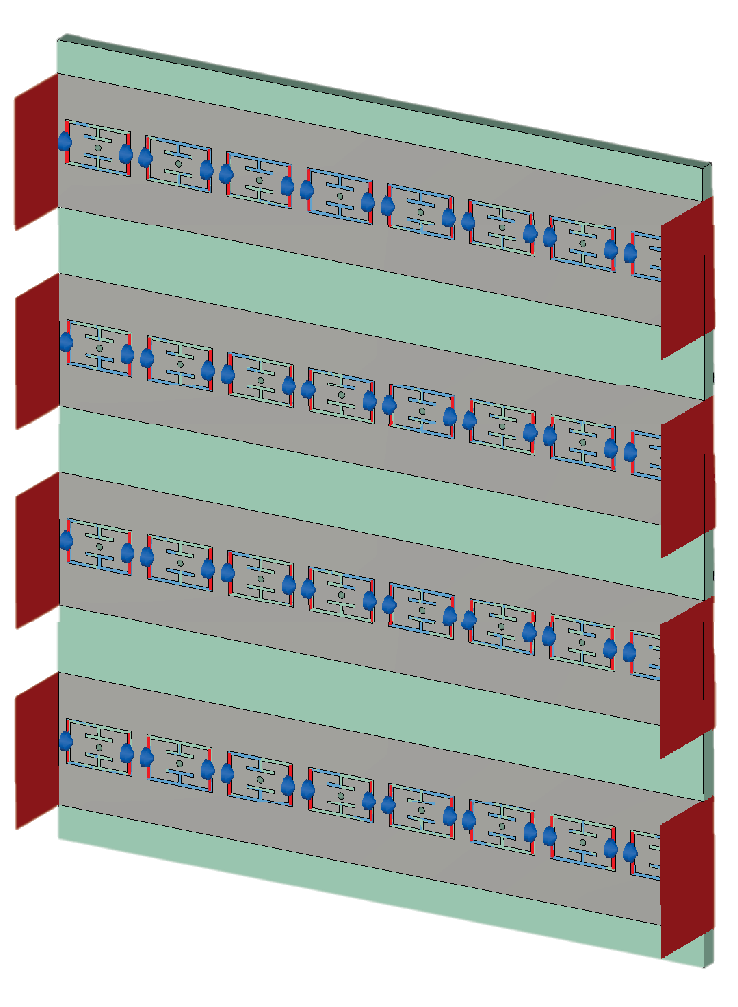}}
	\caption{ An illustration of the considered RHS antenna with $L_T=4$  and $N_{L,T}=8$.}
	\vspace{3mm}
	\label{fig::rhsp}
\end{figure}
		\begin{figure*}[th]
	\centerline{\includegraphics[width=20cm,height=4cm]{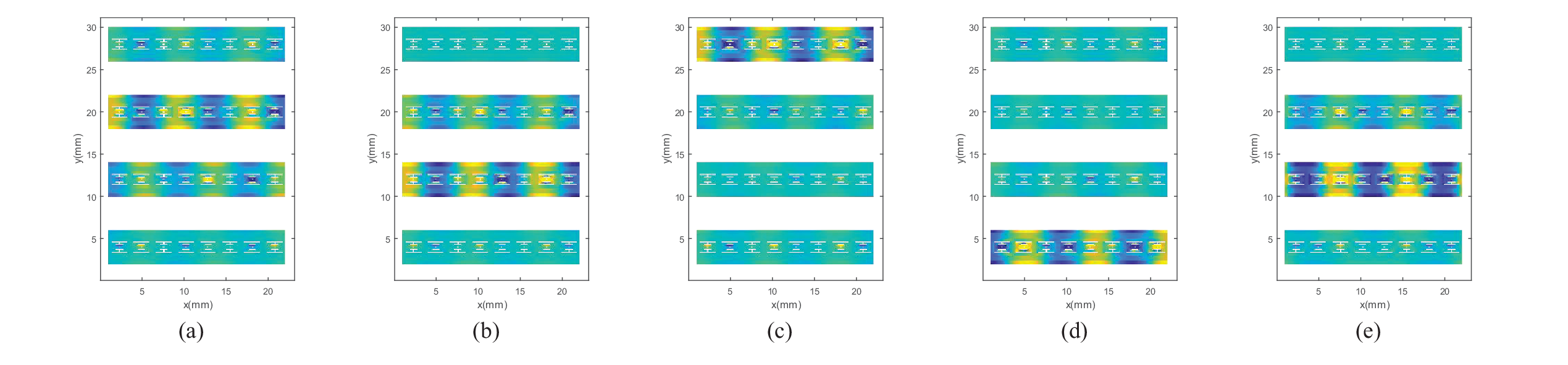}}
	\caption{The surface current of five characteristic modes of the RHS antenna with the strongest radiation ability.
	 }
	\label{fig::jcma}
\end{figure*}

In this section, we present simulation results of a MIMO communication system equipped with two RHS antennas serving as the transmit and receive antennas, respectively. The simulated result of different characteristic modes of the RHS antenna is demonstrated first. Next, the performance of the proposed CMA-GA for the achievable DoF maximization is evaluated. The relationship between the achievable DoF and the number of data streams utilized for communication is demonstrated.
	
For the simulation setup, we consider the case where the RHS antennas\cite{RHS} work at 27 GHz and the channel is the LoS channel modeled by (\ref{equ::dgreen}). The number of antenna elements for each port of RHS is set as $N_{L, T}=N_{L, R}=8$. The number of RHS ports is $3 \leq L_T, L_R \leq 7$. An RHS antenna with 4 ports is illustrated in Fig.~\ref{fig::rhsp}, where two pin diodes are integrated on each RHS antenna element and the red rectangles on two sides of the RHS denote the waveguide ports. One side of the waveguide ports are used for signal transmission and reception and the other for matching. The element spacing of RHS is set as $0.24\lambda$ horizontally and $0.72\lambda$ vertically. The distance between the transmit antenna and the receive antenna is set within the near-field range of the MIMO system, i.e., $2(D_T+D_R)^2/\lambda$~\cite{nearfield}, where $D_T$ and $D_R$ denote the aperture of the transmit and receive antenna array, respectively. In this way, the near-field effect is manifested so that $DoF_{\mathbf{H}}>1$.

	\subsection{Characteristic Mode Analysis of RHS Antenna}

	\begin{figure}[t]
		\centerline{\includegraphics[width=8cm,height=6cm]{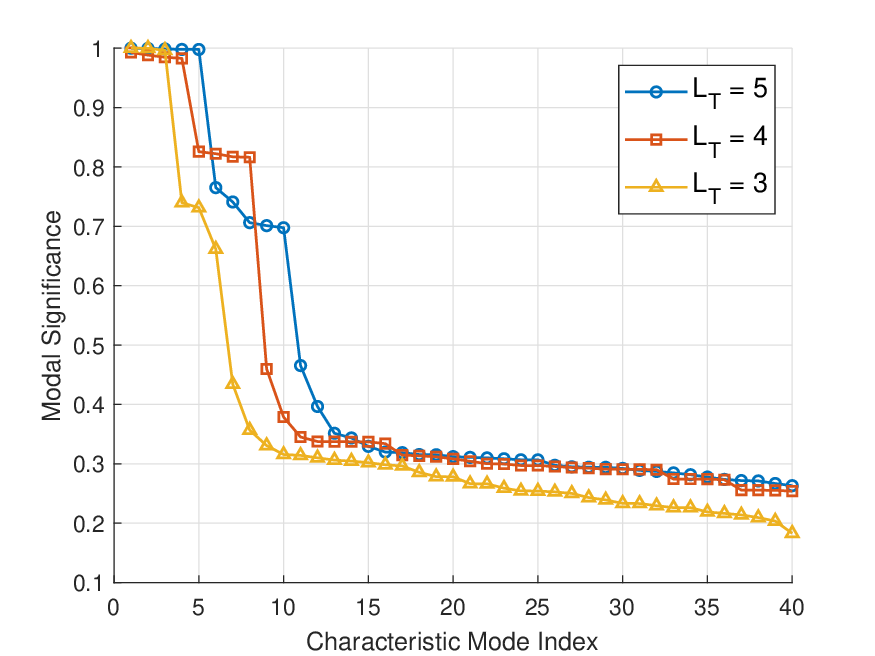}}
		\caption{Modal significance of the RHS antenna with different numbers of ports $L_T$.}
		\vspace{3mm}
		\label{fig::ms}
	\end{figure}
	Fig.~\ref{fig::jcma} demonstrates the current distribution of five characteristic modes with the strongest radiation ability for the RHS antenna when the port number is $4$.
	As shown in Fig.~\ref{fig::jcma}, the current distributions of different characteristic modes are nearly orthogonal in magnitude and polarization to each other, which verifies (\ref{equ::orthogonal}) and lays the foundation for the derivation of the achievable DoF of the CMA-modeled MIMO system.    
	
	Fig.~\ref{fig::ms} demonstrates the modal significance of characteristic modes for RHS with different numbers of ports. For the $i$-th characteristic mode, the modal significance is defined as $|\frac{1}{1+j\lambda_i}|$ and characterizes the radiation ability of the characteristic mode. That is to say, with the growth of the modal significance of a characteristic mode, the proportion of the mode's corresponding electric field in the overall electric field of the antenna increases. As shown in Fig.~\ref{fig::ms}, for RHS with different port numbers, the modal significance for the first few characteristic modes is close to $1$, i.e., the upper bound of modal significance, then it decreases rapidly. As the characteristic mode index continues to increase, the modal significance begins to decrease gradually, converging to $0$ given a significantly large number of characteristic modes. This trend shows that the number of characteristic modes for effective radiation is limited. Besides, with an increase in port numbers, the number of characteristic modes with considerable modal significance grows, which indicates that an enlarged antenna aperture size can contribute to the diversity of characteristic modes with strong radiation ability. The number of characteristic modes in the following simulation is set as 20 to include all characteristic modes for effective radiation.   
	
	\subsection{Evaluation of CMA-GA}
	\begin{figure}[t]
		\centerline{\includegraphics[width=8cm,height=6cm]{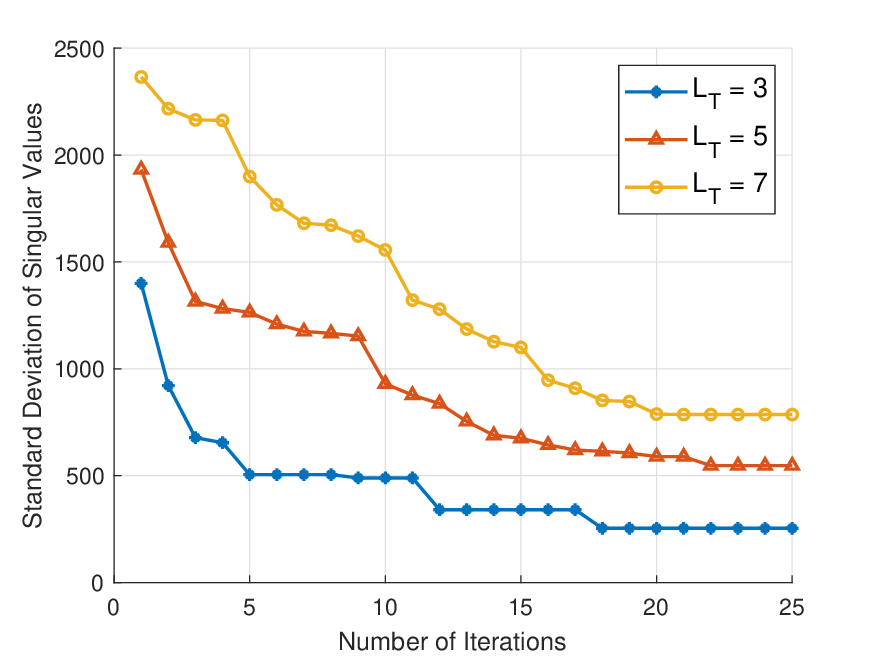}}
		\caption{The standard deviation of singular values concerning different numbers of iterations for the proposed CMA-GA.}
		\label{fig::converge}
	\end{figure}
	
Fig.\ref{fig::converge} presents the convergence of the proposed CMA-GA for different port numbers of the RHS antenna. As shown in Fig.\ref{fig::converge}, with the increase in iteration numbers, the standard deviation of the equivalent channel's singular values gradually decreases and then remains constant. Since the standard deviation is the negative value of the fitness function given in (\ref{equ::fit}), such a trend of curves demonstrates the convergence of CMA-GA. In this way, the difference of singular values is effectively reduced. Besides, a growth in the standard deviation is observed as the port numbers of RHS increase. This is because the difference in modal excitation coefficient between different ports become more obvious with the rise in port numbers. 

	\begin{figure}[t]
	\centerline{\includegraphics[width=8cm,height=6cm]{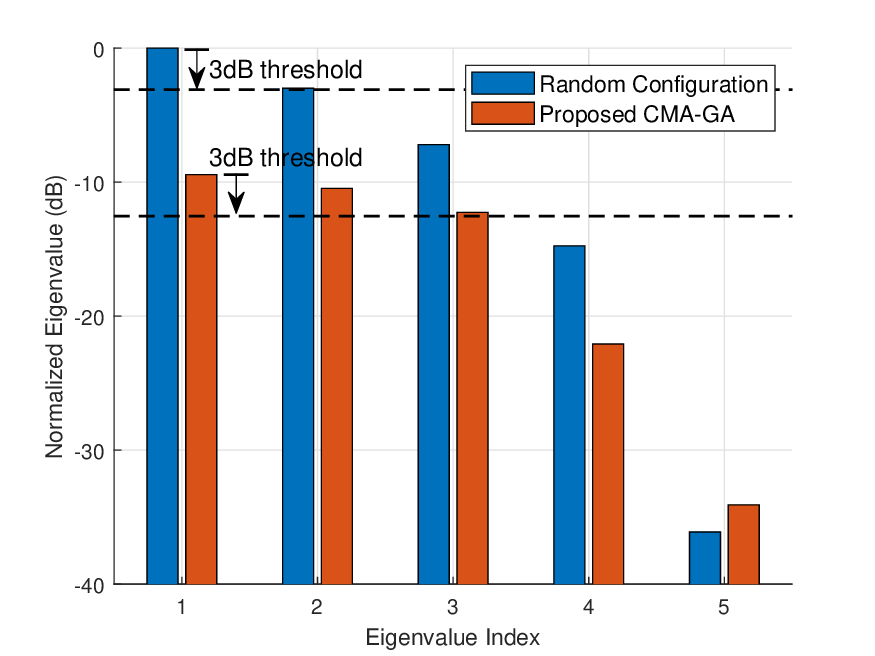}}
	\caption{Eigenvalues of a random and the optimized transceiver RHS configuration.}
	\vspace{3mm}
	\label{fig::optsvd}
\end{figure}

Fig.~\ref{fig::optsvd} demonstrates the eigenvalues distribution of the equivalent channel $\tilde{\mathbf{H}}$ with or without the optimization of CMA-GA. The port number is set as $L_T = L_R = 4$. As shown in Fig.~\ref{fig::optsvd}, for a random RHS configuration without the optimization via the proposed CMA-GA, eigenvalues decay rapidly with the increment of the eigenvalue index. On the contrary, for the RHS configuration optimized with CMA-GA, the eigenvalue distribution is more balanced. The achievable DoF $DoF_\mathbf{H}$ is calculated based on the eigenvalue distribution via (\ref{equ::numdof}), where the threshold $\gamma$ is set as $0.5$ \cite{multidof}, i.e., eigenvalue that is larger than the half of the largest eigenvalue is counted in $DoF_\mathbf{H}$. The threshold is illustrated in Fig.~\ref{fig::optsvd} with dash lines denoted with the 3dB threshold. It is observed that $DoF_\mathbf{H} = 3$ is obtained with the optimization of CMA-GA, which is larger than  $DoF_\mathbf{H} = 2$ with a random RHS configuration. Therefore, CMA-GA leads to a balanced eigenvalue distribution and an increase in $DoF_\mathbf{H}$.

\begin{figure}[t]
\centerline{\includegraphics[width=8cm,height=6cm]{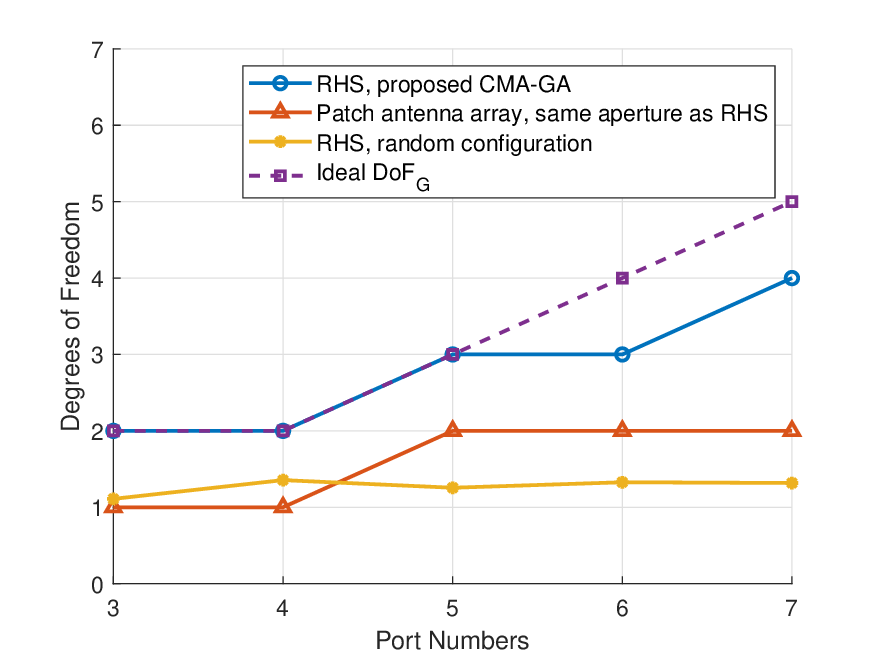}}
\caption{The DoF of the MIMO system of different port numbers.}
\vspace{3mm}
\label{fig::dofd}
\end{figure}

Fig.~\ref{fig::dofd} demonstrates the DoF of the MIMO system with different port numbers. The average $DoF_\mathbf{H}$ of the random RHS configuration and $DoF_\mathbf{H}$ of the patch antenna array with the same aperture as the RHS are presented as benchmarks. The ideal DoF, i.e., $DoF_{\mathbf{G}}$, is obtained where only the wireless propagation environment is considered and the influence of antennas is removed. 
	
As shown in Fig.~\ref{fig::dofd}, the ideal DoF is larger than the achievable DoF, i.e., $DoF_{\mathbf{G}} \geq DoF_{\mathbf{H}}$, which is in accordance with (\ref{equ::dofleq}) and verifies that the achievable DoF of a MIMO system is smaller than the DoF provided by the wireless channel. Besides, compared with the random RHS configuration and the case of patch antenna array with fixed EM properties, the configuration generated by the proposed CMA-GA reaches a higher $DoF_{\mathbf{H}}$, which manifests the ability of CMA-GA to improve the achievable DoF by optimizing the EM properties of RHS. Moreover, with the rise in port numbers, $DoF_{\mathbf{H}}$ obtained via CMA-GA and increases, but the average $DoF_{\mathbf{H}}$ obtained via random configurations fluctuates around $1.3$. This difference implies that an optimized RHS configuration is required to exploit the multiplexing gain provided by the enlarged aperture size.

	\section{Conclusion}\label{sec::conclusion}
	In this paper, we investigated the DoF of the MIMO system where the EM characteristics of antennas were taken into account. The CMA theory was exploited to model the excitation and radiation properties of the antennas via characteristic modes, based on which we derived the signal model and the achievable DoF of the CMA-modeled MIMO system. We formulated an achievable DoF maximization problem, and a case study where the RHS antennas were deployed at the transceiver of the MIMO was demonstrated. We proposed a CMA-GA algorithm to optimize the configuration of the RHS antennas so that the achievable DoF was maximized.
	
	Simulation results showed that: 1) The achievable DoF of the whole system is no larger than the DoF of wireless channel only because of the effect of transceiver antennas, verifying the theoretical analysis. 2) While increasing the number of antenna ports leads to more characteristic modes with large modal significance, it does not necessarily lead to a larger achievable DoF. In this case, optimizing RHS configurations can enhance the achievable DoF by changing the surface current and electric field distribution of characteristic modes. 3) With the RHS configuration obtained based on the proposed CMA-GA algorithm, the achievable DoF of the CMA-modeled MIMO system is increased, serving as a guideline for RHS configuration design.
	
	
\begin{appendices}
\section{Proof of Lemma \ref{prop::rx}}\label{app:rx}
The residual electric field $\bm{e}_{R,orth}$ is orthogonal to the space $\mathbb{E}_R$ formed by the electric field of all characteristic modes, i.e., $\{\bm{e}_{R,1}, \bm{e}_{R,2},...,\bm{e}_{R,n_R}\}.$

We denote $\overline{\mathbf{E}}_R = [\bm{e}_{R,1},\bm{e}_{R,2},...,\bm{e}_{R,n_R}]$, which is the matrix collecting all the electric fields of the characteristic modes. The projection of the received electric field on the space is given using the pseudo-inverse as
\begin{equation}
	\bm{e}_{R,proj} = \overline{\mathbf{E}}_R(\overline{\mathbf{E}}_R^H\overline{\mathbf{E}}_R)^{-1}\overline{\mathbf{E}}_R^H \bm{e}_{R}.
\end{equation}

The received electric field can be written as
\begin{equation}
	\sum_{n=1}^{n_R}{\beta_{n}m_{R,n}}\bm{e}_{R,n} = \overline{\mathbf{E}}_R {\rm diag}\{\bm{m}_R\}\bm{\beta}_R=\bm{e}_{R,proj},\label{equ::rxequ}
\end{equation}
where $\bm{m}_R$ and $\bm{\beta}_R$ are the vectors collecting all significance coefficients of characteristic modes of the receive antenna and the superposition coefficients, respectively. (\ref{equ::rxequ}) can be considered as a linear equation, and $\bm{\beta}_R$ is a variable to be obtained.

Since the number of characteristic modes is much smaller than the number of divided elements on the antenna, the pseudo-inverse is used to solve the equation $\overline{\mathbf{E}}_R {\rm diag}\{\bm{m}_R\}\bm{\beta}=\bm{e}_{R,proj}$. The vector superposition coefficient can be given as
\begin{equation}
	\bm{\beta}_R = {\rm diag}\{\bm{m}_R\}^{-1}(\overline{\mathbf{E}}_R^H\overline{\mathbf{E}}_R)^{-1}\overline{\mathbf{E}}_R^H \bm{e}_{R}.
\end{equation}

\section{Proof of Proposition\ref{prop::dofsimp}}\label{app::prop4}

First, the rank of channel correlation matrix equals the rank of channel matrix\cite{proofrank}, i.e.,
\begin{align}
	&DoF_{H} = {\rm Rank}(\tilde{\mathbf{H}}\tilde{\mathbf{H}}^H) = {\rm Rank}(\tilde{\mathbf{H}})=\nonumber\\
	&{\rm Rank}(\mathbf{V}_R^{\dagger} {\rm diag}\{\bm{m}_R\}^{-1}\overline{\mathbf{E}}_R^\dagger\mathbf{G}\overline{\mathbf{J}}_T {\rm diag}\{\bm{m}_T\}\mathbf{V}_T).\label{equ::rankhsimp}
\end{align}

Next, a decomposition is applied to the discrete wireless channel $\mathbf{G}$ based on the electric field and current of the characteristic modes, i.e., $\overline{\mathbf{E}}_R$ and $\overline{\mathbf{J}}_T$, which is denoted as
\begin{equation}
	\mathbf{G} = \overline{\mathbf{E}}_R\bm{\Gamma}\overline{\mathbf{J}}_T^T.\label{equ::decompG2}
\end{equation}

Considering the orthogonality of the normalized surface current between different characteristic modes of the transmit antenna, we have
\begin{equation}
	{\overline{\mathbf{J}}_T}^T\overline{\mathbf{J}}_T\approx \mathbf{I}_{n_T},\label{equ::orthogonal}
\end{equation}
where $\mathbf{I}_{n_T} \in \mathbb{Z}^{n_T \times n_T}$ is the identity matrix.
By substituting the decomposition based on characteristic modes, i.e., (\ref{equ::decompG2}), and the orthogonality of surface current, i.e., (\ref{equ::orthogonal}), into (\ref{equ::rankhsimp}), $\tilde{\mathbf{H}}$ is rewritten as

\begin{equation}
	\tilde{\mathbf{H}}\approx\mathbf{V}_R^{\dagger} {\rm diag}\{\bm{m}_R\}^{-1}\bm{\Gamma} {\rm diag}\{\bm{m}_T\}\mathbf{V}_T.
\end{equation}

Therefore, the achievable DoF of the MIMO system is given as
\begin{equation}
	DoF_{\mathbf{H}} ={\rm Rank}(\mathbf{V}_R^{\dagger} {\rm diag}\{\bm{m}_R\}^{-1}\bm{\Gamma} {\rm diag}\{\bm{m}_T\}\mathbf{V}_T),
\end{equation}
which completes the proof.

\section{Proof of Proposition \ref{prop::dofrelation}}\label{app::prop1}

First, the proof of (\ref{equ::dofleq}) is given. The equivalent channel of the CMA-modeled MIMO system is given as $\tilde{\mathbf{H}} = \mathbf{U}_R\mathbf{G}\mathbf{U}_T$. Hence, based on the property of matrix rank, we have
\begin{align}
	&DoF_{\mathbf{H}} \leq \min\{{\rm Rank}(\mathbf{V}_R^{\dagger} {\rm diag}\{\bm{m}_R\}^{-1}),{\rm Rank}(\mathbf{G}),\nonumber\\
	&{\rm Rank}({\rm diag}\{\bm{m}_T\}\mathbf{V}_T)\} = \min\{{\rm Rank}(\mathbf{V}_R^{\dagger} diag\{\bm{m}_R\}^{-1}),\nonumber\\
	&DoF_\mathbf{G},{\rm Rank}({\rm diag}\{\bm{m}_T\}\mathbf{V}_T)\}.
\end{align}

Hence, we have $DoF_{\mathbf{H}} \leq DoF_\mathbf{G}$.

Second, the proof of (\ref{equ::dofgeq}) is given. By referring to the definition of $DoF_{\mathbf{H}}$ in (\ref{equ::dofdef}) and Frobenius Rank Equality, we have
\begin{equation}
	DoF_{\mathbf{H}} = {\rm Rank}(\mathbf{V}_R^{\dagger} {\rm diag}\{\bm{m}_R\}^{-1}\bm{\Gamma} {\rm diag}\{\bm{m}_T\}\mathbf{V}_T)
\end{equation}

\begin{align}
	DoF_{\mathbf{H}} \geq& {\rm Rank}(\mathbf{V}_R^{\dagger} {\rm diag}\{\bm{m}_R\}^{-1}\bm\Gamma)+\nonumber\\
	&{\rm Rank}(\bm\Gamma {\rm diag}\{\bm{m}_T\}\mathbf{V}_T)-{\rm Rank}(\bm\Gamma).\label{equ::frobrank}
\end{align}
With the Sylvester Inequality, the first two terms on the right side of (\ref{equ::frobrank}) are lower-bounded by
\begin{align}
	&{\rm Rank}(\mathbf{V}_R^{\dagger} {\rm diag}\{\bm{m}_R\}^{-1}\bm\Gamma) \geq \nonumber\\
	&{\rm Rank}(\mathbf{V}_R^{\dagger} {\rm diag}\{\bm{m}_R\}^{-1}) + {\rm Rank}(\bm\Gamma) - n_R,\\
	&{\rm Rank}(\bm\Gamma {\rm diag}\{\bm{m}_T\}\mathbf{V}_T) \geq \nonumber \\
	&{\rm Rank}({\rm diag}\{\bm{m}_T\}\mathbf{V}_T) + {\rm Rank}(\bm\Gamma) - n_T,
\end{align}

By substituting the above two inequalities into (\ref{equ::frobrank}), we have 
\begin{align}
	DoF_{\mathbf{H}} \geq &{\rm Rank}(\mathbf{V}_R^{\dagger} {\rm diag}\{\bm{m}_R\}^{-1})+{\rm Rank}({\rm diag}\{\bm{m}_T\}\mathbf{V}_T)\nonumber \\
	+&{\rm Rank}(\bm\Gamma)-n_R-n_T.\label{equ::doflb2}
\end{align}
Because the diagonal matrices ${\rm diag}\{\bm{m}_R\}^{-1}$ and ${\rm diag}\{\bm{m}_T\}$ are invertible, we have ${\rm Rank}(\mathbf{V}_R^{\dagger}{\rm diag}\{\bm{m}_R\}^{-1})={\rm Rank}(\mathbf{V}_R^{\dagger})$ and ${\rm Rank}({\rm diag}\{\bm{m}_T\}\mathbf{V}_T)={\rm Rank}(\mathbf{V}_T)$. Therefore, (\ref{equ::doflb2}) is simplified as
\begin{equation}
	DoF_{\mathbf{H}} \geq {\rm Rank}(\mathbf{V}_R^{\dagger})+{\rm Rank}(\mathbf{V}_T) + {\rm Rank}(\bm\Gamma)-n_R-n_T.
\end{equation}

Finally, assuming the SVD of $\mathbf{V}_R$ is given as $\mathbf{V}_R = \mathbf{U}_{\mathbf{V}_R}\Sigma_{\mathbf{V}_R }\mathbf{V}_{\mathbf{V}_R}^{H}$, the SVD of $\mathbf{V}_R^{\dagger}$ can be denoted as $\mathbf{V}_R^{\dagger} = \mathbf{V}_{\mathbf{V}_R}\Sigma_{\mathbf{V}_R}^{-1}\mathbf{V}_{\mathbf{U}_R}^{H}$, where  $\Sigma_{\mathbf{V}_R}$ and $\Sigma_{\mathbf{V}_R}^{-1}$ satisfies
\begin{align}
	&[\Sigma_{\mathbf{V}_R}]_{i,j} = \frac{1}{[\Sigma_{\mathbf{V}_R}^{-1}]_{i,j}}, i=j,\\
	&[\Sigma_{\mathbf{V}_R}]_{i,j} = [\Sigma_{\mathbf{V}_R}^{-1}]_{i,j} =0, i \neq j,
\end{align} 

Therefore, we have 
\begin{equation}
	{\rm Rank}(\mathbf{V}_R^{\dagger}) = {\rm Rank}(\mathbf{V}_R),
\end{equation}
which completes the proof.
\end{appendices}

\end{document}